\newtheorem{myDef}{Definition}
\newtheorem{assumption}{Assumption}
\newtheorem{proposition}{Proposition}
\newtheorem{proof}{Proof}
\newcommand{\re}[1]{{\color[rgb]{0,0,0}#1}}
\newtheorem{theorem}{Theorem}
\begin{document}

\title{Deep Learning-based Physical-Layer Secret Key Generation for FDD Systems}

\author{Xinwei~Zhang, \IEEEmembership{Student Member,~IEEE,}
        Guyue~Li, \IEEEmembership{Member,~IEEE,}
        Junqing~Zhang,
        Aiqun~Hu,~\IEEEmembership{Senior Member,~IEEE,}
        Zongyue~Hou,
	Bin Xiao,~\IEEEmembership{Senior Member,~IEEE}

\thanks{This work was supported in part by the National Natural Science Foundation of China under Grant 6217011510, 61801115 and 61941115, in part by the Zhishan Youth Scholar Program of SEU (3209012002A3), in part by Jiangsu key R \& D plan BE2019109. (Corresponding author: G. Li)}
\thanks{X.~Zhang, G.~Li, Z.~Hou are with the School of Cyber Science and Engineering, Southeast University, Nanjing 210096, China (e-mail: zxw1998@seu.edu.cn; guyuelee@seu.edu.cn; zyhou@seu.edu.cn).}
\thanks{J.~Zhang is with the Department of Electrical Engineering and Electronics, University of Liverpool, Liverpool L69 3GJ, U.K. (e-mail: junqing.zhang@liverpool.ac.uk).}
\thanks{A.~Hu is with the School of Information Science and Engineering, and National Mobile Communications Research Laboratory, Southeast University, Nanjing 210096, China (e-mail: aqhu@seu.edu.cn).}
\thanks{G.~Li and A.~Hu are also with the Purple Mountain Laboratories for Network and Communication Security, Nanjing 210096, China.}
\thanks{B.~Xiao is with the Department of Computing, The Hong Kong Polytechnic University, Hong Kong, China (e-mail: csbxiao@comp.polyu.edu.hk).}
}


\maketitle

\begin{abstract}
Physical-layer key generation (PKG) establishes cryptographic keys from highly correlated measurements of wireless channels, which relies on reciprocal channel characteristics between uplink and downlink, is a promising wireless security technique for Internet of Things (IoT). However, it is challenging to extract common features in frequency division duplexing (FDD) systems as uplink and downlink transmissions operate at different frequency bands whose channel frequency responses are not reciprocal any more. Existing PKG methods for FDD systems have many limitations, i.e., high overhead and security problems. This paper proposes a novel PKG scheme that uses the feature mapping function between different frequency bands obtained by deep learning to make two users generate highly similar channel features in FDD systems. In particular, this is the first time to apply deep learning for PKG in FDD systems. We first prove the existence of the band feature mapping function for a given environment and a feedforward network with a single hidden layer can approximate the mapping function. Then a Key Generation neural Network (KGNet) is proposed for reciprocal channel feature construction, and a key generation scheme based on the KGNet is also proposed. Numerical results verify the excellent performance of the KGNet-based key generation scheme in terms of randomness, key generation ratio, and key error rate. Besides, the overhead analysis shows that the method proposed in this paper can be used for resource-contrained IoT devices in FDD systems.

\end{abstract}

\begin{IEEEkeywords}
Physical-layer security, frequency division duplexing, deep learning, secret key generation.
\end{IEEEkeywords}

\IEEEpeerreviewmaketitle

\section{Introduction}

\IEEEPARstart{W}{ith} the rapid development of the fifth generation (5G) and beyond communication systems, the security of wireless communication has received increasing attention~\cite{zou2016survey}. 
Due to the open nature of wireless channels, attackers can initiate various attacks such as eavesdropping, which pose a huge threat to wireless security. Traditionally, the cryptographic approaches including symmetric key cryptography and asymmetric key cryptography have been used to protect confidential information from eavesdroppers \cite{william2019cryptography}. However,  due to the three typical characteristics of 5G Internet of Things (IoT) networks, concerning mobility, massive IoT with resource constraints and heterogeneous hierarchical architecture, the traditional cryptographic mechanisms face problems such as difficulty in key distribution, excessive reliance on mathematical complexity, etc., and are not suitable or efficient \cite{wang2019physical}. To mitigate these issues, researchers have developed a new secure communication method from the physical layer in wireless communication, termed as \textit{physical-layer key generation (PKG)}~\cite{10.1145/3140257,zhang2016key,zhang2020new}. This technique uses the inherent randomness of fading channels between two legitimate users, namely Alice and Bob, to generate keys without the need for a third party.

The realization of PKG depends on three unique propagation characteristics of electromagnetic wave, namely channel reciprocity, temporal variation and spatial decorrelation. Among them, channel reciprocity indicates that the same channel characteristics can be observed at both ends of the same link, which is the basis for key generation~\cite{zhang2016key,zhang2020new}. 
For time division duplexing (TDD) systems, both the uplink and downlink are in the same carrier frequency band, and the channel responses obtained by Alice and Bob are reciprocal.
However, for frequency division duplexing (FDD) systems, the uplink and downlink transmit over different carrier frequencies, and the uplink and downlink experience different fading. Hence, most of the mutually accessible channel parameters used in TDD systems, such as received signal strength, channel gain, envelope, and phase, may be completely different between the uplink and downlink in FDD systems \cite{Li2019physical}. Therefore, it is challenging to find  reciprocal channel features for key generation in FDD systems. On the other hand, FDD dominates existing cellular communications, such as LTE and narrowband IoT. Key generation for these FDD-based systems will provide information-theoretically secure keys for them, hence is strongly desirable.

There have been several key generation methods developed for FDD systems~\cite{wang2012wireless,liu2019secret,goldberg2013method,wu2013secret,qin2016exploiting,allam2017channel,li2018constructing}. Those methods generate keys by extracting frequency-independent reciprocal channel parameters or constructing reciprocal channel gains, which have many limitations, i.e., high overhead and security problems \cite{goldberg2013method}.  Therefore, how to design a key generation scheme in a secure manner at a low communication overhead for FDD systems is still an open question.

To address this open problem, we consider how to construct the reciprocal features used to generate the key in FDD systems firstly. In \cite{alrabeiah2019deep}, the channel-to-channel mapping function in frequency is proved to exist under the condition that the channel mapping of the candidate position to the antenna is bijective, and the condition is realized with high probability in several practical multiple-input multiple-output (MIMO) systems. Inspired by this, we reveal the existence of band feature mapping function for a given environment in a FDD orthogonal frequency-division multiplexing (OFDM) system, which means that the feature mapping can be used to construct reciprocal features in FDD systems. However, the feature mapping between different frequency bands is difficult to be described by mathematical formulas.
To solve this problem, we construct the reciprocal channel features by using deep learning to learn the channel mapping function between different frequency bands.
Then a new key generation method based on the band feature mapping is proposed for FDD systems. The features of one frequency band is estimated simultaneously by both Alice and Bob to generate the key. 
In particular, this is the first time to apply deep learning for secret key generation in FDD systems.
The main contributions of this paper are as follows.
\begin{itemize}
	\item We prove the existence of the channel feature mapping function between different frequency bands for a given environment under the condition that the mapping function from the candidate user positions to the channels is bijective. Then, we prove that the channel feature mapping between different frequency bands can be approximated by a feedforward network with a single hidden layer. The above conclusions prove the feasibility of using deep learning to construct reciprocity features, and provide a new solution for the PKG in FDD systems. 
	\item We propose a Key Generation neural Network (KGNet) for band feature mapping to generate reciprocal channel features, and verify the performance in the simulation. Compared with three benchmark deep learning networks, the performance of KGNet can achieve good fitting and generalization performance under low signal-noise ratio (SNR). In addition, we artificially add noise by combining the noise-free and 0 dB dataset at a certain size to construct the training dataset, which improves the robustness of KGNet under low SNR. 
	\item Based on the KGNet, we propose a novel secret key generation scheme for FDD systems, in which Alice and Bob both estimate the features of one frequency band without any loop-back. Numerical results verify the excellent performance of the KGNet-based key generation scheme in terms of randomness, key generation ratio, and key error rate. Besides, the overhead analysis shows that the method proposed in this paper can be used for resource-contrained IoT devices in FDD systems.
\end{itemize}

The remainder of this paper is structured as follows. Section \ref{Related work}
presents the related work. The system model for FDD systems is introduced in Section \ref{System Model}. In Section \ref{DL}, we prove the existence of the band feature mapping for a given environment under a condition that the mapping function from the candidate user positions to the channels is bijective, and a feedforward network with a single hidden layer can approximate the mapping. The channel feature construction algorithm based on KGNet is presented in Section \ref{KGNet}. Section \ref{KGNet-based KG} designs the key generation scheme. The simulation results for evaluating the performance of the KGNet and the proposed key generation scheme are provided in Section \ref{Simulation}, which is followed by conclusions in Section \ref{conclusion}.

\section{Related Work}
\label{Related work}
This section introduces related work, including the previous FDD key generation methods and the application of deep learning in the wireless physical layer.

\subsection{Secret Key Generation for FDD Systems}
In the past few years, several secret key generation methods have been developed for FDD systems. The main methods of FDD key generation are summarized as follows.
\begin{enumerate}
	\item Extracting the frequency-independent reciprocal channel parameters to generate the key \cite{wang2012wireless,liu2019secret}. In \cite{wang2012wireless}, the angle and delay are used to generate the key as they are supposed to hold the reciprocity in FDD systems. However, the accurate acquisition of the angle and delay requires a lot of resources, such as large bandwidth and multiple antennas \cite{vasisht2016eliminating}. In addition, a secret key generation method based on the reciprocity of channel covariance matrix eigenvalues is also proposed in FDD systems \cite{liu2019secret}. But this method requires special configuration of the antenna array.
	
	\item Establishing the channel with reciprocal channel gain by means of the additional reverse channel training phase to generate the key, called the loopback-based methods \cite{goldberg2013method,wu2013secret,qin2016exploiting,allam2017channel}. In \cite{goldberg2013method}, Alice and Bob generate the key by estimating the channel impulse response (CIR) of the combinatorial channel which is the combination of uplink and downlink channels. In \cite{wu2013secret}, instead of the CIR of the combinatory channel, only the uplink channel state information (CSI) is estimated by both sides of the communication using a special CSI probing method to generate the key. Furthermore, there are two secret key generation schemes, which generate secret key by exploiting shared physical channel information on non-reciprocal forward and reverse channels \cite{qin2016exploiting}. However, the loopback-based key generation methods require additional reverse channel training and multiple iterative interactions, which not only increase the complexity of channel detection, but also increase the risk of eavesdropping. Furthermore, it has been proved not security in~\cite{linning2018investigation}.
		
	\item Constructing reciprocal features based on the prior knowledge of the channel model by separating channel paths to generate keys \cite{li2018constructing}. However, in the complex multi-path environment, separating the channel paths is not simple.	
\end{enumerate}

Through the analysis of the previous FDD key generation methods, the existing key generation methods in the FDD systems have problems such as large overhead and insecurity. The key generation method proposed in this paper uses the mapping function between different frequency bands to construct reciprocal channel features from a new perspective. Alice and Bob only need to probe the channel once, without additional reverse channel training. In addition, we use deep learning to learn the mapping function between frequency bands, which is data-driven, so there are not many restrictions on the model, and no complicated calculations are needed to extract reciprocal channel parameters. 

\subsection{Deep Learning for Wireless Physical Layer}
Deep learning  has been introduced to the wireless physical layer and achieved excellent performance in many areas such as channel estimation \cite{yang2019deep1}, CSI feedback \cite{wen2018deep}, downlink CSI prediction \cite{wang2019ul}, modulation classification \cite{lin2020improved}, etc. 

Gao et al. \cite{gao2019deep} proposed a direct-input deep neural network (DI-DNN) to estimate channels by using the received signals of all antennas. 
Wen et al. \cite{wen2018deep} used deep learning technology to develop CSINet, a novel CSI sensing and recovery mechanism that learns to effectively use channel structure from training samples.
Yang et al. \cite{yang2019deep} used a spare complex-valued neural network (SCNet) for the downlink CSI prediction in FDD massive MIMO systems. Safari et al. \cite{safari2019deep} proposed a convolutional neural network (CNN) and a generative neural network (GAN) for predicting downlink CSI by observing the downlink CSI.
To the best of the authors' knowledge, there is no study on deep learning-based secret key generation method for FDD systems.

Alrabeiah et al. \cite{alrabeiah2019deep} used a fully-connected neural network to learn and approximate the channel-to-channel mapping function. Inspired by this, we propose KGNet to generate reciprocal channel features and propose a KGNet-based key generation scheme for FDD systems. The KGNet is able to learn the mapping function by off-line training. After training, KGNet is used to directly predict the reciprocal features. Therefore, the method proposed in this paper has a small overhead in practical application and great potential for practical deployment. In particular, This paper applies deep learning to key generation for the first time.

\section{System Model}
\label{System Model}

\subsection{Channel Model}
Key generation involves two legitimate users, namely Alice (one base station) and Bob (one user) as well as an eavesdropper, Eve, located $d$ m away from Bob. Alice and Bob will send signals to each other alternately and obtain channel estimation $\hat{h}_A$ and $\hat{h}_B$. In a TDD system, the channel estimation of Alice and Bob is highly correlated based on the channel reciprocity. Hence we can design a key generation protocol, $\mathcal{K}(\cdot)$, which will convert the analog channel estimation to a digital binary sequence.
The process can be expressed as
\begin{align}
	K_A &= \mathcal{K}(\hat{h}_A),\\
	K_B &= \mathcal{K}(\hat{h}_B).
\end{align}


The above protocol has worked well in TDD-based systems, e.g., WiFi \cite{mathur2008radio}, ZigBee \cite{aono2005wireless}, and LoRa \cite{xu2018lora,ruotsalainen2019experimental}. However, its adoption in FDD systems is extremely challenging, because the uplink and downlink are not reciprocal any more. For the first time, this paper will employ deep learning to construct reciprocal channel features at Alice and Bob and extend key generation for FDD-based systems.

Specifically, this paper considers Alice and Bob are equipped with a single antenna and operate at the FDD mode. Alice and Bob simultaneously transmit signals on different carrier frequencies, $f_{AB}$ and $f_{BA}$, respectively. We denote the links from Bob to Alice and from Alice to Bob as Band1 and Band2, respectively. The CIR consists of $N$ paths and can be defined as 
\begin{equation}
\begin{split}
h(f,\tau)=\sum_{n=0}^{N-1}\alpha_{n}e^{-j2\pi f\tau_n+j\phi_n}\delta(\tau-\tau_n),
\end{split}
\label{CIR}
\end{equation}
where $f$ is the carrier frequency, $\alpha_{n}$, $\tau_n$, and $\phi_n$ are the magnitude, delay, and phase shift of the $n^{th}$ path, respectively. Note that $\alpha_{n}$ depends on (i) the distance $d_n$ between Alice and Bob, (ii) the carrier frequency $f$, (iii) the scattering environment. The phase $\phi_n$ is determined by the scatterer(s) materials and wave incident/impinging angles at the scatterer(s). The delay $\tau_n = \frac{d_n}{c}$, where $c$ is the speed of light. 

In OFDM systems, the channel frequency response (CFR) of the $l^{th}$ sub-carrier can be expressed as 
\begin{equation}
\begin{split}
H(f,l)=\sum_{n=0}^{N-1}\alpha_{n}e^{-j2\pi f\tau_n+j\phi_n}e^{-j2\pi\tau_{n}f_{l}},
\end{split}
\label{CFR}
\end{equation}
where $f_{l}$ is the frequency of the $l^{th}$ subcarrier relative to the center frequency $f$. Now, we define the $1\times L$ channel vector ${\mathbf{H}(f)}=\{H(f,0),...,H(f,L-1)\} $ as the CFR of frequency $f$, and $\mathbf{H}_{1}=\mathbf{H}(f_{BA})$ as the CFR of Band1, $\mathbf{H}_{2}=\mathbf{H}(f_{AB})$ as the CFR of Band2, where $L$ is the total number of sub-carrier.

\subsection{Attack Model}
Based on the assumptions of most key generation schemes \cite{8314118}, \cite{9123376}, we also focus on passive adversary. The eavesdropper, Eve, is assumed to be located at least half of the larger wavelength in Band1 and Band2 away from the legitimate users, which can be mathematically given as
\begin{equation}
\begin{split}
d>\max\{\frac{c}{2f_{AB}}, \frac{c}{2f_{BA}}\}.
\end{split}
\label{d}
\end{equation}
Since wireless channel gains decorrelate over half a wavelength in a multipath environment, Eve's channel is assumed to be independent of the channel of legitimate users.  Therefore, Eve cannot infer the channel between legitimate users to generate the key from the channels he listens on. 


\subsection{Key Generation Scheme for FDD Systems}
According to the channel model, the carrier frequency will affect the phase and amplitude of the CFR, and the CFR difference is more obvious in the superposition of multiple paths at different frequencies \cite{zhang2016efficient}. 
This paper aims to construct reciprocal channel features in FDD systems for Alice and Bob based on the feature mapping function between different frequency bands. In Section~\ref{DL}, we first proved that there is a feature mapping function between different frequency bands that can be obtained through deep learning. We designed a KGNet-based key generation scheme, as shown in Fig. \ref{KGNet_ALL_1}. It consists of two stages, i.e., the KGNet training and KGNet-based key generation stages. 
\begin{itemize}
	\item We designed a KGNet model to learn the feature mapping function, which will be described in detail in Section \ref{KGNet}. Specifically, Alice trains KGNet to learn the feature mapping function between Band1 CFR $\mathbf{H}_{1}$ and Band2 CFR $\mathbf{H}_{2}$ stored in the database. The dataset in the database can be obtained by Alice collecting the CSI obtained by multiple channel detection measurements and the CSI feedback from Bob. The trained KGNet model will be used for key generation.
	\item In the KGNet-based key generation stage,  Alice and Bob send a pilot signal at the same time and perform channel estimation to obtain $\mathbf{H}_{1}$ and $\mathbf{H}_{2}$, respectively. Alice and Bob then preprocess their channel vector and obtain channel features $\mathbf{x}_{1}$ and $\mathbf{x}_{2}$. Alice will use KGNet to map the Band1 features $\mathbf{x}_{1}$ to the estimated Band2 features $\widehat{\mathbf{x}}_{2}$, which enables Alice and Bob to obtain highly correlated channel characteristics $\widehat{\mathbf{x}}_{2}$ and $\mathbf{x}_{2}$, respectively.
They will finally perform quantization, information reconciliation, and privacy amplification to generate the same key. The key generation scheme will be described in Section \ref{KGNet-based KG}. 
\end{itemize}

\begin{figure*}[!t]
	\centering
	\includegraphics[width=\linewidth]{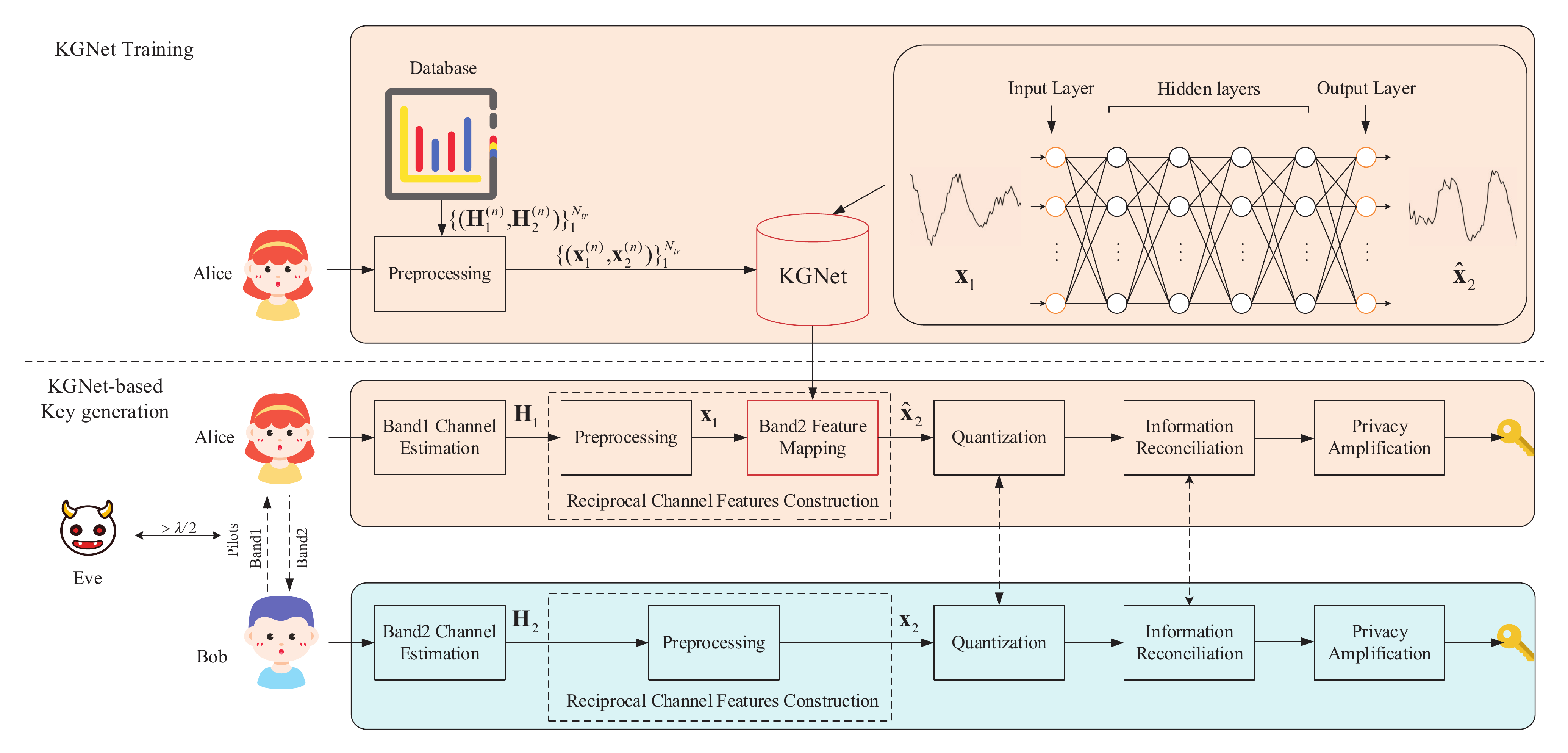}
	\caption{The KGNet-based Key Generation scheme for FDD systems. }
	\label{KGNet_ALL_1}
\end{figure*}

\section{Deep Learning for Band Feature Mapping} 
\label{DL}

In this paper, our main theme is to construct reciprocal channel features for the FDD system to generate a key according to the feature mapping function between frequency bands. However, according to (\ref{CFR}), we CANNOT determine whether the features between different frequency bands have a definite mapping function. And if there is a mapping function, how can we map the features on one frequency band to the features on another frequency band?

In this section, we define the band feature mapping function according to the approach in \cite{alrabeiah2019deep}. Then, we prove that leveraging deep learning can find the mapping function.

\subsection{Existence of the Band Mapping}
Consider the channel model in (\ref{CFR}), the channel is completely defined by the parameters $\alpha_{n}$, $\tau_n$, $\phi_n$, which are functions of the environment and the carrier frequency. Supposed that Alice is fixed, for a given static communication environment, there exists a deterministic mapping function from the position $P_B$ of Bob to the channel $\mathbf{H}{(f)}$ at every antenna element $m$ of Alice \cite{vieira2017deep}. Therefore, when the number of antenna is 1, there is also a deterministic position-to-channel mapping function when Alice and Bob are both equipped with a single antenna.
\begin{myDef}
	The position-to-channel mapping $\boldsymbol{\Phi}_f$ can be written as follows,
	\begin{equation}
	\begin{split}
	\boldsymbol{\Phi}_f:\{P_B\}\rightarrow\{\mathbf{H}(f)\},
	\end{split}
	\end{equation}
where the sets $\{P_B\}$ represent the possible positions of Bob and the sets $\{\mathbf{H}(f)\}$ represent the CFR of the corresponding channels. 
	\label{definition1}
\end{myDef}

Then, we investigate the existence of the mapping from the channel vector $\mathbf{H}(f)$ to the position of Bob. For that, we adopt the following assumption.
\begin{assumption}
	The position-to-channel mapping $\boldsymbol{\Phi}_f:\{P_B\}\rightarrow\{\mathbf{H}(f)\}$, is bijective.
	\label{assumption1}
\end{assumption}

This assumption means that every position of Bob has a unique channel vector $\mathbf{H}(f)$. The bijectiveness of this mapping depends on the number of subcarries, the set of candidate user locations and the surrounding environment. In massive MIMO systems, the probability that $\boldsymbol{\Phi}_f$ is bijective is actually very high in practical wireless communication scenarios, and approaches 1 as the number of antennas at the BS increased \cite{alrabeiah2019deep}. The same with a OFDM-FDD system, the probability that this mapping is bijective is also very high in practical wireless communication scenarios, and approaches 1 as the number of subcarries increases. Therefore, it is reasonable to adopt Assumption \ref{assumption1} in OFDM-FDD systems.

Under the Assumption \ref{assumption1}, we define the channel-to-position mapping $\boldsymbol{\Phi}_f^{-1}$ as the inverse of the mapping $\boldsymbol{\Phi}_f$, which can be written as
\begin{equation}
\begin{split}
\boldsymbol{\Phi}_f^{-1}:\{\mathbf{H}(f)\} \rightarrow \{P_B\}.
\end{split}
\end{equation}

\begin{proposition}
	If Assumption \ref{assumption1} is satisfied, there exists a channel-to-channel mapping function for a given communication environment, which can be written as follows,
	\begin{equation}
	\begin{split}
	\boldsymbol{\Psi}_{f_{BA}\rightarrow f_{AB}} = \boldsymbol{\Phi}_{f_{AB}} \circ \boldsymbol{\Phi}_{f_{BA}}^{-1}:\{\mathbf{H}(f_{BA})\} \rightarrow \{\mathbf{H}(f_{AB})\},
	\end{split}
	\end{equation}
where $\boldsymbol{\Phi}_{f_{AB}} \circ \boldsymbol{\Phi}_{f_{BA}}^{-1}$ represents the composite mapping related to $\boldsymbol{\Phi}_{f_{AB}}$ and $\boldsymbol{\Phi}_{f_{BA}}^{-1}$.
	\label{proposition1}
\end{proposition}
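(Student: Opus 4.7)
The plan is to realize $\boldsymbol{\Psi}_{f_{BA}\to f_{AB}}$ as the composition of two already-defined mappings with Bob's position serving as the common latent variable. The underlying intuition is that, for a fixed environment, a channel realization at one frequency uniquely identifies \emph{where} Bob is, and that position, when plugged into (\ref{CFR}) with the other carrier frequency, uniquely determines the channel on the other band. So the argument is essentially ``go back to the position, then go forward to the other frequency.''

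First, I would instantiate Definition \ref{definition1} at both carrier frequencies to obtain the two position-to-channel maps $\boldsymbol{\Phi}_{f_{BA}}:\{P_B\}\to\{\mathbf{H}(f_{BA})\}$ and $\boldsymbol{\Phi}_{f_{AB}}:\{P_B\}\to\{\mathbf{H}(f_{AB})\}$. Both of these are well-defined single-valued maps because (\ref{CFR}) becomes a deterministic function of $P_B$ once the environment and the carrier frequency are fixed. Crucially, both maps share the same domain $\{P_B\}$ of candidate Bob positions, since the set of feasible positions is a property of the environment and not of the frequency being probed.

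Second, I would apply Assumption \ref{assumption1} at $f=f_{BA}$ to conclude that $\boldsymbol{\Phi}_{f_{BA}}$ is a bijection onto $\{\mathbf{H}(f_{BA})\}$. Bijectivity is precisely the condition needed for the set-theoretic inverse $\boldsymbol{\Phi}_{f_{BA}}^{-1}:\{\mathbf{H}(f_{BA})\}\to\{P_B\}$ to be itself a function rather than a set-valued relation. Then I would define
\[
\boldsymbol{\Psi}_{f_{BA}\to f_{AB}} \;:=\; \boldsymbol{\Phi}_{f_{AB}}\circ \boldsymbol{\Phi}_{f_{BA}}^{-1},
\]
check that the codomain $\{P_B\}$ of the first factor coincides with the domain of the second, and conclude that the composition is a well-defined function from $\{\mathbf{H}(f_{BA})\}$ to $\{\mathbf{H}(f_{AB})\}$, which is exactly the claim of the proposition.

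I do not expect a genuine obstacle: once Assumption \ref{assumption1} is granted, the statement is a one-line set-theoretic consequence. The only subtlety worth articulating carefully in the write-up is \emph{why} the two forward maps legitimately share the domain $\{P_B\}$ — i.e., the standing assumption that a single static environment is fixed throughout, so that switching the carrier frequency alters the channel response but not the collection of admissible Bob locations. Everything else is routine composition.
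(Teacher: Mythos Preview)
Your proposal is correct and follows essentially the same approach as the paper's own proof: instantiate Definition~\ref{definition1} at $f_{AB}$ to get the forward map, invoke Assumption~\ref{assumption1} at $f_{BA}$ to obtain the inverse, then observe that the codomain of $\boldsymbol{\Phi}_{f_{BA}}^{-1}$ coincides with the domain of $\boldsymbol{\Phi}_{f_{AB}}$ so the composition is well-defined. Your write-up is more explicit about why the two forward maps share the domain $\{P_B\}$, but the argument is the same.
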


\begin{proof}
	See Appendix \ref{proof1}.
\end{proof}

Proposition \ref{proposition1} illustrates that channels on different frequency bands have a certain mapping function. Therefore, we believe that it is possible to construct reciprocal channel features from different frequency bands to generate keys.

\subsection{Deep Learning for Band Feature Mapping}
Since the channel vectors are all complex numbers and differ in the order of magnitude of each subcarrier, they cannot be directly used in the deep learning algorithm and the quantization part of subsequent key generation. Therefore, we propose a feature extraction mapping function $\boldsymbol{\xi}$ to preprocess $\mathbf{h}$, which can be written as 
\begin{equation}
\begin{split}
\boldsymbol{\xi}_f:\mathbf{H}(f)\rightarrow\mathbf{x}(f).
\end{split}
\end{equation}

Supposed $\boldsymbol{\xi}_f$ is linear, we can denote the inverse mapping of $\boldsymbol{\xi}$ as $\boldsymbol{\xi}^{-1}$ that can be given as
\begin{equation}
\begin{split}
\boldsymbol{\xi}_f^{-1}:\mathbf{x}(f) \rightarrow \mathbf{H}(f).
\end{split}
\label{inverse of feature extraction mapping}
\end{equation}

Next, we investigate the existence of the Band feature mapping as shown below.

\begin{proposition}
	With Proposition \ref{proposition1}, there exists a Band feature mapping function for a given communication environment, which can be written as follows,
	\begin{equation}
	\begin{split}
	\boldsymbol{\Psi}^{'}_{f_{BA}\rightarrow f_{AB}} =\boldsymbol{\xi}_{f_{AB}} \circ \boldsymbol{\Psi}_{f_{BA}\rightarrow f_{AB}} \circ \boldsymbol{\xi}_{f_{AB}}^{-1} : \mathbf{x}(f_{BA}) \rightarrow \mathbf{x}(f_{AB}).
	\end{split}
	\end{equation}
\label{proposition2}
\end{proposition}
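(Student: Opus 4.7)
The plan is to build the band feature mapping by straightforward function composition, exactly as the statement already suggests, and then verify that each stage in the composite chain is well-defined so that the overall map from $\mathbf{x}(f_{BA})$ to $\mathbf{x}(f_{AB})$ is a bona fide function. Concretely, I would take an arbitrary feature $\mathbf{x}(f_{BA})$ and trace it through three successive stages: first apply $\boldsymbol{\xi}_{f_{BA}}^{-1}$ from (\ref{inverse of feature extraction mapping}) to recover a unique CFR $\mathbf{H}(f_{BA})$; next apply $\boldsymbol{\Psi}_{f_{BA}\rightarrow f_{AB}}$ from Proposition \ref{proposition1}, which under Assumption \ref{assumption1} already provides a well-defined CFR-to-CFR mapping between the two bands for the fixed environment; finally apply $\boldsymbol{\xi}_{f_{AB}}$ to obtain $\mathbf{x}(f_{AB})$. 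The composition $\boldsymbol{\xi}_{f_{AB}}\circ\boldsymbol{\Psi}_{f_{BA}\rightarrow f_{AB}}\circ\boldsymbol{\xi}_{f_{BA}}^{-1}$ is then the claimed $\boldsymbol{\Psi}'_{f_{BA}\rightarrow f_{AB}}$.

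To argue the composite is well-defined in the function sense I would verify single-valuedness by propagation: if two input features agree, injectivity of $\boldsymbol{\xi}_{f_{BA}}^{-1}$ forces the intermediate CFRs to agree, Proposition \ref{proposition1} forces the mapped CFRs in Band2 to agree, and finally $\boldsymbol{\xi}_{f_{AB}}$ yields identical outputs. The domain and codomain follow immediately from the domains and codomains of the three factor maps, so nothing beyond bookkeeping is required once the factorization is in place.

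The main obstacle is justifying the existence of $\boldsymbol{\xi}_{f_{BA}}^{-1}$ on the relevant image. The statement only postulates that $\boldsymbol{\xi}_f$ is linear, and linearity on its own does not imply invertibility unless the feature extraction preserves enough information about the underlying CFR. I would therefore state explicitly, as a supporting premise consistent with the text around (\ref{inverse of feature extraction mapping}), that $\boldsymbol{\xi}_f$ is a linear bijection between $\{\mathbf{H}(f)\}$ and its image $\{\mathbf{x}(f)\}$; with this, the inverse restricted to the image is guaranteed, and the whole argument collapses to a one-line composition. I would also flag a typographical correction: for the domain to be $\mathbf{x}(f_{BA})$ the rightmost factor must be $\boldsymbol{\xi}_{f_{BA}}^{-1}$, not $\boldsymbol{\xi}_{f_{AB}}^{-1}$ as written, and I would make the fix in the proof itself. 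Everything else reduces to invoking Proposition \ref{proposition1} and the definitions of $\boldsymbol{\xi}_f$ and $\boldsymbol{\xi}_f^{-1}$; no analytic or probabilistic content beyond what is already established is needed.
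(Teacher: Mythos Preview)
Your proposal is correct and follows essentially the same approach as the paper: the paper's proof simply checks that the domain of $\boldsymbol{\xi}_{f_{AB}}$ matches the codomain of $\boldsymbol{\Psi}_{f_{BA}\rightarrow f_{AB}}$, and that the domain of $\boldsymbol{\Psi}_{f_{BA}\rightarrow f_{AB}}$ matches the codomain of $\boldsymbol{\xi}_{f_{BA}}^{-1}$, so the composite exists. Your version is more detailed (adding the single-valuedness argument and the remark that linearity alone does not yield invertibility), and you correctly spot that the rightmost factor should be $\boldsymbol{\xi}_{f_{BA}}^{-1}$ rather than $\boldsymbol{\xi}_{f_{AB}}^{-1}$; the paper silently makes this same correction in its own proof.
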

\begin{proof}
	See Appendix \ref{proof2}.
\end{proof}

Although Proposition \ref{proposition2} proves that there is a feature mapping function between frequency bands, this function cannot be expressed as mathematical formulas. Therefore, based on the universe mapping \cite{hornik1989multilayer}, we introduce deep learning and obtain the following theorem.
\begin{theorem}
	For any given error $\varepsilon>0$, there exists a positive constant $M$ large enough such that
	\begin{equation}
	\begin{split}
	\sup_{\boldsymbol{x}\in\mathbb{H}} \parallel \textbf{NET}_M(\mathbf{x}(f_{BA}),\boldsymbol{\Omega})-\boldsymbol{\Psi}^{'}_{f_{BA}\rightarrow f_{AB}}(\mathbf{x}(f_{BA})) \parallel \leq \varepsilon,\\
	\mathbb{H}=\{\mathbf{x}(f_{BA})\},
	\end{split}
	\end{equation}
where $\textbf{NET}_M(\mathbf{x}(f_{BA}))$ is the output of a feedforward neural network with only one hidden layer. $\mathbf{x}(f_{BA})$, $\boldsymbol{\Omega}$ and $M$ denote the input data, network parameters, and the number of hidden units, respectively.
\label{theorem1}
\end{theorem}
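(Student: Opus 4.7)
The plan is to invoke the universal approximation theorem for feedforward networks with a single hidden layer (Hornik et al., cited as \cite{hornik1989multilayer} in the paper), since the statement is essentially an application of that result to the specific target function $\boldsymbol{\Psi}^{'}_{f_{BA}\rightarrow f_{AB}}$ whose existence is guaranteed by Proposition \ref{proposition2}. The universal approximation theorem says, informally, that for any continuous function defined on a compact subset of Euclidean space, and any prescribed tolerance $\varepsilon>0$, there exists a positive integer $M$ and a single-hidden-layer network with $M$ hidden units (under a mild non-polynomial/sigmoidal condition on the activation) whose output approximates the target function uniformly within $\varepsilon$. So the strategy is simply to check that the hypotheses of that theorem are met for the band feature mapping.

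First I would identify the target function. By Proposition \ref{proposition2}, we have $\boldsymbol{\Psi}^{'}_{f_{BA}\rightarrow f_{AB}}=\boldsymbol{\xi}_{f_{AB}}\circ\boldsymbol{\Psi}_{f_{BA}\rightarrow f_{AB}}\circ\boldsymbol{\xi}_{f_{AB}}^{-1}$, a well-defined map from $\{\mathbf{x}(f_{BA})\}$ to $\{\mathbf{x}(f_{AB})\}$. Next I would argue that each of the three constituent maps is continuous: $\boldsymbol{\xi}_f$ and $\boldsymbol{\xi}_f^{-1}$ are linear by assumption, hence continuous; and $\boldsymbol{\Psi}_{f_{BA}\rightarrow f_{AB}}=\boldsymbol{\Phi}_{f_{AB}}\circ\boldsymbol{\Phi}_{f_{BA}}^{-1}$ inherits continuity from the continuity of the physical position-to-channel map (every path amplitude, delay and phase in (\ref{CFR}) is a continuous function of the scatterer geometry and user position, and the inverse is continuous under Assumption \ref{assumption1} using, for instance, the fact that $\boldsymbol{\Phi}_{f_{BA}}$ is a continuous bijection from the compact set $\{P_B\}$ to its image). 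Composition of continuous maps is continuous, so $\boldsymbol{\Psi}^{'}_{f_{BA}\rightarrow f_{AB}}$ is continuous on $\mathbb{H}$.

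Then I would verify compactness of the input domain $\mathbb{H}=\{\mathbf{x}(f_{BA})\}$. Because the set of admissible Bob positions $\{P_B\}$ is physically bounded (and can be taken closed, hence compact), and because $\boldsymbol{\xi}_{f_{BA}}\circ\boldsymbol{\Phi}_{f_{BA}}$ is continuous, its image $\mathbb{H}$ is compact in the appropriate Euclidean space (after separating real and imaginary parts as is standard for complex-valued neural-network inputs). With a continuous target on a compact domain, I can invoke \cite{hornik1989multilayer} directly: there exist parameters $\boldsymbol{\Omega}$ and an integer $M$ such that
\begin{equation}
\sup_{\mathbf{x}(f_{BA})\in\mathbb{H}}\bigl\|\textbf{NET}_M(\mathbf{x}(f_{BA}),\boldsymbol{\Omega})-\boldsymbol{\Psi}^{'}_{f_{BA}\rightarrow f_{AB}}(\mathbf{x}(f_{BA}))\bigr\|\leq\varepsilon,
\end{equation}
which is the claim.

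The main obstacle is the continuity-and-compactness bookkeeping, in particular continuity of $\boldsymbol{\Phi}_{f_{BA}}^{-1}$. Bijectivity alone (Assumption \ref{assumption1}) does not automatically give a continuous inverse; the cleanest way is to note that $\boldsymbol{\Phi}_{f_{BA}}$ is a continuous bijection from a compact Hausdorff space $\{P_B\}$ onto its image, which by a standard topological lemma forces the inverse to be continuous. Everything else (linearity of $\boldsymbol{\xi}_f$, compactness of $\mathbb{H}$, and the final appeal to the universal approximation theorem) is routine once that point is settled.
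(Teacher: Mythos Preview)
Your proposal is correct and follows essentially the same route as the paper: establish that $\mathbb{H}$ is compact, that $\boldsymbol{\Psi}^{'}_{f_{BA}\rightarrow f_{AB}}$ is continuous (via continuity of $\boldsymbol{\Phi}_{f_{AB}}$, $\boldsymbol{\Phi}_{f_{BA}}^{-1}$, and the linear $\boldsymbol{\xi}$ maps), and then invoke Hornik's universal approximation theorem. Your treatment is in fact more careful than the paper's, since you explicitly justify continuity of $\boldsymbol{\Phi}_{f_{BA}}^{-1}$ via the compact-Hausdorff argument, whereas the paper simply asserts it.
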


\begin{proof}
	See Appendix \ref{proof3}.
\end{proof}

Theorem \ref{theorem1} reveals that the Band feature mapping function can be approximated arbitrarily well by a feedforward network with a single hidden layer. Thus, we can use deep learning to obtain the feature mapping function between frequency bands, and generate reciprocal channel features to generate the same key.

\section{KGNet-based Reciprocal Channel Features Construction}
\label{KGNet}


As a feedforward network with a single hidden layer can approximate the mapping of band features with different carrier frequencies, we propose a KGNet for channel feature mapping to construct reciprocal channel features. We will first introduce the dataset preprocessing. Then, we introduce the KGNet architecture and describe how to train and test the KGNet.

\re{
\subsection{Dataset Generation}
Since the wireless environment is complex and changeable, there are datasets $\{\mathbb{D}_{O}^{e}\}_{e=1}^E$ of E different environments, and $E \to \infty$. It is impossible to get datasets in all different environments. This paper currently only considers a scenario in a given environment. In the future, techniques such as transfer learning and meta learning can also be used to extend this method to new environments. 

In a given environment, we denote the CSI of Band1 and Band2 as the original dataset, which is divided into the training dataset and testing dataset as $\mathbb{D}_{OTr}$ and $\mathbb{D}_{OTe}$, respectively. $\mathbb{D}_{OTr}=\{(\mathbf{H}_{1}^{(n)},\mathbf{H}_{2}^{(n)})\}_{n=1}^{N_{tr}}$, including $N_{tr}$ training label samples. $\mathbb{D}_{OTe}=\{(\mathbf{H}_{1}^{(n)},\mathbf{H}_{2}^{(n)})\}_{n=1}^{N_{te}}$, including $N_{te}$ testing label samples. It should be emphasized that the training dataset cannot include all possible channels between Alice and Bob. As long as the training dataset is sufficiently representative of the environment, the trained neural network can learn the band feature mapping of the environment. Therefore, Alice only needs to collect enough data that is sufficient to represent the environment.
}

\subsection{Preprocessing of Dataset}
\label{Preprocessing of Dataset}
In order to enable the deep learning model to perform efficiently, data samples usually go through a sequence of preprocessing operations, including realization and normalization. The first operation is realization. Since deep learning algorithms work in real domain, we introduce the mapping $\boldsymbol{\xi}^{(1)}$ to stack the real and imaginary parts of the complex channel vector, which can be written as 
\begin{equation}
\begin{split}
\boldsymbol{\xi}^{(1)} :\mathbf{H}'\rightarrow(\mathfrak{R}(\mathbf{H}),\mathfrak{T}(\mathbf{H})),
\end{split}
\end{equation}
where $\mathfrak{R}(\cdot)$ and $\mathfrak{T}(\cdot)$ denote the real and imaginary parts of a matrix, vectors or scales, respectively. Through realization, the complex channel vector becomes $1\times2L$ real channel vector $\mathbf{H}'_1$ and $\mathbf{H}'_2$.

The second operation is normalization. Since each dimension of the original dataset usually has a different order of magnitude, directly using the original dataset to train the network will affect the efficiency of the network. Normalization is commonly used to normalize the dataset so that the range of the dataset is between 0 and 1. It is performed using the maximum and minimum value in the dataset. The value is given by:
\begin{equation}
\begin{split}
\begin{cases}
H_{1,max}^{'l}=\max\limits_{n=1,...,N_{tr}}\{H_1^{'l}\}^{n}\\
H_{1,min}^{'l}=\min\limits_{n=1,...,N_{tr}}\{H_1^{'l}\}^{n}
\end{cases}
l=0,...,2L-1,
\end{split}
\end{equation}
where $H_1^{'l}$ is the $l$th element of $\mathbf{H}'_1$. We introduce the mapping $\boldsymbol{\xi}^{(2)}$ to normalize the dataset, which can be written as:
\begin{equation}
\begin{split}
\boldsymbol{\xi}^{(2)}:
x_{1}^{l} \rightarrow \frac{H_{1}^{'l}-H_{1,min}^{'l}}{H_{1,max}^{'l}-H_{1,min}^{'l}},l=0,...,2L-1  ,
\end{split}
\end{equation}
where $x_1^{l}$ is the $l$th element of $\mathbf{x}_1$. The normalization of $\mathbf{H}'_2$ follows the same procedure.

After the above processing, the training dataset and testing dataset as $\mathbb{D}_{Tr}=\{(\mathbf{x}_{1}^{(n)},\mathbf{x}_{2}^{(n)})\}_{n=1}^{N_{tr}}$ and $\mathbf{D}_{Te}=\{(\mathbf{x}_{1}^{(n)},\mathbf{x}_{2}^{(n)})\}_{n=1}^{N_{te}}$ that can be directly put into the network are obtained. We can define $\boldsymbol{\xi}_f=\boldsymbol{\xi}^{(2)}\circ\boldsymbol{\xi}^{(1)}$, which is a linear transformation that satisfies the supposition of (\ref{inverse of feature extraction mapping}). 

Note that $H_{1,max}^{'l}$, $H_{1,min}^{'l}$, $H_{2,max}^{'l}$, $H_{2,min}^{'l}$ used to normalize the testing dataset are all derived from training dataset.

\subsection{KGNet Architecture}
\label{KGNet Architecture}
Based on the feedforward network, we propose the KGNet, which consists of  one input-layer, four hidden-layer and one output-layer. As shown in Fig. 1, the input of the network is $\mathbf{x}_{1}$ obtained after preprocessing $\mathbf{H}_{1}$.  The output of the network is a cascade of nonlinear transformation of $\mathbf{x}_{1}$, i.e.,
\begin{equation}
\begin{split}
\widehat{\mathbf{x}}_{2}=\textbf{KGNet}(\mathbf{x}_{1},\boldsymbol{\Omega}),
\end{split}
\end{equation}
where $\Omega$ is all the trainable parameters in this network. Obviously, KGNet is used to deal with a vector regression problem. The activation functions for the hidden layers and the output layer are the rectified linear unit (ReLU) function and the sigmoid function, respectively. 

\subsection{Training and Testing}
\label{Training and Testing}

In the training stage, the dateset $\mathbb{D}_{Tr}$ is collected as the complete training dataset. In each time step, $V$ training samples are randomly selected from $\mathbb{D}_{Tr}$ as $\mathbb{D}_{TrB}$. 
The KGNet is trained to minimize the difference between the output $\widehat{\mathbf{x}}_{2}$ and the label $\mathbf{x}_2$ by the adaptive moment estimation (ADAM) algorithm \cite{kingma2014adam}. The loss function can be written as follow: 
\begin{equation}
\begin{split}
Loss_{\mathbb{D}}(\boldsymbol{\Omega})=MSE(\widehat{\mathbf{x}}_{2}, \mathbf{x}_{2})=\frac{1}{VN_\mathbf{x}}\sum_{v=0}^{V-1}\|\widehat{\mathbf{x}}_{2}^{(v)}-\mathbf{x}_{2}^{(v)}\|_2^2,
\end{split}
\end{equation}
where $V$ is the batch size, the superscript $(v)$ denotes the index of the $v$-th training sample, $N_\mathbf{x}$ is the length of the vector $\mathbf{x}_{2}$, $\mathbb{D}=\{(\mathbf{x}_{1},\mathbf{x}_{2})\}_{v=0}^{V-1}$ is a batch-sized training dataset, and $\|\cdot\|_2$ denotes the $\ell_2$ norm. 
Various loss functions can be used to train the neural network, e.g., mean square error (MSE), mean absolute error (MAE), and logcosh. Since the performance of using these loss functions is basically the same when dealing with this problem, we choose MSE as the loss function, which is also used in most works to deal with similar problems \cite{yang2019deep}.

In the testing stage, the KGNet parameter $\boldsymbol{\Omega}$ is fixed. The testing dataset $\mathbb{D}_{Te}$ is generated and is used to test the performance of the KGNet. Furthermore, the dataset can also be used to evaluate the performance of the initial key after quantization.
 

\section{The KGNet-based Key Generation Scheme}
\label{KGNet-based KG}
The KGNet-based key generation scheme for FDD systems consists of five steps including channel estimation, reciprocal channel features construction, quantization, information reconciliation and privacy amplification, which are designed in this section.

\subsection{Channel Estimation}
\label{Channel Estimation}
Channel estimation is the first step for the key generation scheme, during which Alice and Bob send a pilot signal to each other simultaneously and estimate the CSI. Through this step, Alice and Bob can obtain the channel coefficient vector of Band1 and Band2, namely $\mathbf{H}_1$ and $\mathbf{H}_2$, respectively. 

\subsection{Reciprocal Channel Features Construction}
Reciprocal channel features construction is a process in which Alice and Bob construct reciprocal channel features according to the CSIs estimated in Section \ref{Channel Estimation}, which is the most important step for the key generation scheme. 

For Alice, this step consists of two parts: preprocessing channel coefficient vector $\mathbf{H}_1$ and Band2 feature mapping. The first part includes realization and normalization, as explained in Section \ref{Preprocessing of Dataset}. In the second part, Alice performs feature mapping using the pre-trained KGNet designed in Section \ref{Training and Testing}. For Bob, this step only involves the preprocessing channel coefficient vector $\mathbf{H}_2$. Through the above steps, Alice and Bob can obtain the $1\times2L$ channel feature vector $\widehat{\mathbf{x}}_2$ and $\mathbf{x}_2$, respectively.

\subsection{Quantization}
Upon acquiring the channel features, Alice and Bob should apply the same quantization algorithm to convert channel features into a binary bit stream with low key error rate (KER). In this paper, based on the equal probability quantization method, we propose a Gaussian distribution-based quantization method with guard-band (GDQG) for OFDM systems. 

Different from \cite{zenger2015security} that estimates the mean and variance of the amplitudes of CFR samples in a time sequence, we estimate the mean and variance of the real and imaginary parts of CFR across subcarriers, which can be expressed as
\begin{equation}
	\begin{split}
		\mu=\frac{1}{2L}\sum_{l=0}^{2L-1}x^l,
	\end{split}
\end{equation}
\begin{equation}
	\begin{split}
		\sigma^2=\frac{1}{2L-1}\sum_{l=0}^{2L-1}(x^l-\mu)^2,
	\end{split}
\end{equation}
The distribution of channel features can be approximated by a Gaussian distribution. Therefore, we fit the probability of the channel features into a definite Gaussian distribution $\mathcal{N}_Q=\mathcal{N}(\mu,\sigma^2)$.

The $k^{th}$ quantization interval is calculated as,
\begin{equation}
	[F^{-1}(\frac{k-1}{K}+\varepsilon), F^{-1}(\frac{k}{K}-\varepsilon)], k=2,...,K-1
	  ,
	  \label{equation20}
\end{equation}
and the $1^{st}$ quantization interval is $[0, F^{-1}(\frac{1}{K}-\varepsilon)]$, $K^{th}$ quantization interval is $[F^{-1}(\frac{K-1}{K}+\varepsilon,1]$, where $F^{-1}$ is defined as the inverse of the cumulative distribution function (CDF) of $\mathcal{N}_Q$ and $K$ is the quantization level. The $\varepsilon \in (0,1/2K)$ is defined as the quantization factor, which is used to set the limit of the guard band. Then, we use the common binary encoding to convert the features into a bit stream, where all the features that are not in the quantization intervals are set to -1.



The steps of the quantization method are given in the Algorithm \ref{alg:quantization}.
\begin{algorithm}[htb] 
	\caption{Gaussian distribution-based quantization method with guard-band} 
	\label{alg:quantization} 
	\begin{algorithmic}[1] 
		\REQUIRE 
		The band feature vector $\mathbf{x}$;  The quantization factor $\varepsilon$;
		\ENSURE 
		The quantized binary sequence $\mathbf{Q}$;	
		\STATE Calculate the mean $\mu$ and variance $\sigma^2$ of the feature vector $\mathbf{x}$;
		\STATE Construct Gaussian distribution $\mathcal{N}_Q=\mathcal{N}(\mu,\sigma^2)$;
		\STATE Calculate the inverse of the CDF $F^{-1}$ of $\mathcal{N}_Q$;
		\STATE Initialize the value of $\varepsilon$;
		\STATE Calculate the quantization intervals using (\ref{equation20});
		\FOR{$i = 0 : 2L-1$}
		\IF {$x^{i} \in k^{th}$ quantization interval}
		\STATE Binary encoding
		\ELSE
		\STATE Coded as -1
		\ENDIF
		\ENDFOR
		\RETURN $\mathbf{Q}$.
	\end{algorithmic}
\end{algorithm}
Finally, Alice and Bob send each other indexes whose values are -1 and delete all these bits.
They can get the initial secret key $\mathbf{Q}_A$ and $\mathbf{Q}_B$.


\subsection{Information Reconciliation and Privacy Amplification}
To further reduce the KER, we can adopt information reconciliation to correct the mismatch bits. Through information reconciliation techniques, which can be implemented with protocols such as Cascade \cite{zhu2013extracting} or BCH code \cite{dodis2004fuzzy}, etc, the performance of KER can be significantly improved. 
Privacy amplification can apply hash functions to distill a shorter but secret key. However, to ensure fairness of comparison, we only compare the performance of the initial secret key without information reconciliation and privacy amplification.

\section{Simulation Results}
\label{Simulation}
In this section, we evaluate the performance of our proposed deep learning-based key generation scheme for FDD systems. We first describe the simulation setup and metrics, and then we discuss the simulation results and overhead.

\subsection{Simulation Setup}
\label{Simulation Setup}
\begin{figure}[!t]
	\centering \includegraphics[width=\linewidth]{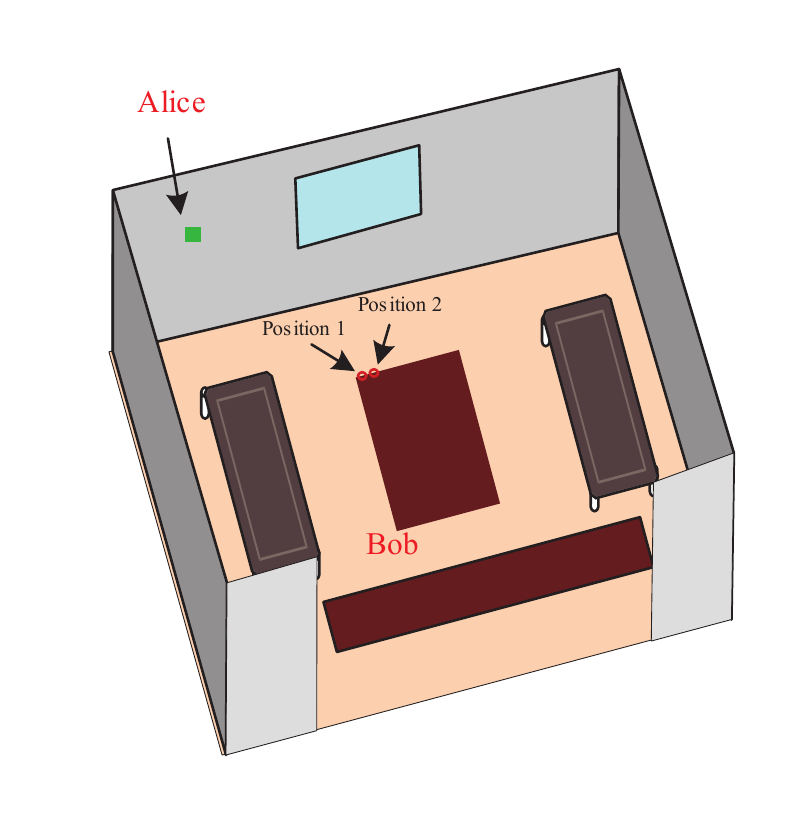} 
	\caption{An approximate depiction of the considered environment. The green little box on the ceiling represents the distributed antennas of the base station. The two maroon rectangles are two grids representing possible user locations.}  
	\label{scenario} 
\end{figure}

We consider the indoor distributed massive MIMO scenario `I1' that is offered by the DeepMIMO dataset \cite{alkhateeb2019deepmimo} and is generated based on the accurate 3D ray-tracing simulator Wireless InSite \cite{Remcom}. 
As depicted in Fig. \ref{scenario}, the model involves one BS with single antenna and 100,000 users. We assume that the BS is Alice and multiple users are possible locations for Bob and Eve. This scenario is available at two operating frequencies 2.4 GHz and 2.5 GHz, which emulate the Band1 and Band2 carrier frequencies of the singer-input singer-output (SISO) setup in Section \ref{System Model}. We set the number of OFDM subcarriers as 64, the number of paths as 5, and the bandwidth as 0.5 GHz to generate the dataset. This dataset constructs the channels between every candidate user location and a single antenna at the Band1 and Band2 frequencies. To form the training and testing dataset, we take the first 100,000 of the generated dataset and shuffle, and then split into a training dataset $\mathbb{D}_{OTr}$ with 80\% of the total size and a testing dataset $\mathbb{D}_{OTe}$ with the rest 20\%. These datasets are used to train the KGNet and evaluate the performance of the proposed key generation scheme.


\begin{table}[!t] 
\caption{Parameters for the KGNet.} 
\label{tab:KGNet-setup}
	\centering
	\begin{tabular}{|p{5cm}<{\centering}| c|}
		\hline
		\bfseries Parameter & \bfseries Value\\ \hline
		Number of neurons in hidden layers& (512,1024,1024,512)\\\hline
		Optimization & ADAM \cite{kingma2014adam} \\ \hline
		Kernel initializer &  glorot\_uniform\\ \hline
		Bias initializer &  zeros\\ \hline
		Exponential decay rates for ADAM: ($\rho_1,\rho_2$) &(0.9,0.999)\\ \hline
		Disturbance factor for ADAM & 1e-8\\ \hline
		Learning rate& 1e-3\\\hline
		Number of epochs& 500\\\hline
		Batch size& 128\\\hline
		Number of training samples& 80,000\\\hline
		Number of testing samples& 20,000\\
		\hline
	\end{tabular}
\end{table}

The KGNet is implemented on a workstation with one Nvidia GeForce GTX 1660Ti GPU and one Inter(R) Core(TM) i7-9700 CPU. Tensorflow 2.1 is employed as the deep learning framework. The parameters of the KGNet are given in Table \ref{tab:KGNet-setup}.

\subsection{Performance Metrics}
\label{7.2}

We use the \textit{Normalized Mean Square Error (NMSE)} to evaluate the predictive accuracy of the network, which  is defined as
\begin{equation}
\begin{split}
\mathrm{NMSE}=E\left[\frac{\parallel\widehat{\mathbf{x}}_{2}-\mathbf{x}_2\parallel_2^2}{\parallel \mathbf{x}_{2}\parallel_2^2}\right],
\end{split}
\end{equation}
where $E\left[\cdot\right]$ represents the expectation operation. 

We evaluate the performance of the initial key using the following metrics.
\begin{itemize}
	\item \textit{Key Error Rate (KER)}: It is defined as the number of error bits divided by the number of total key bits.
	\item \textit{Key Generation Ratio (KGR)}: It is defined as the number of initial key bits divided by the number of subcarriers. If all the real and imaginary features of the subcarriers are used to generate the key bits and the guard band is not used during quantization, then the KGR reaches a maximum of 2.
	\item \textit{Randomness}: The randomness reveals the distribution of bit streams. The National Institute of Standards and Technology (NIST) statistical test \cite{rukhin2001statistical} will be used for the randomness test for the key.
\end{itemize}

We evaluate the performance at Eve by using the  \textit{Normalized Vector Distance (NVD)} between two vectors $\mathbf{K}_1$ and $\mathbf{K}_2$, which is defined as
\begin{equation}
\begin{split}
\mathrm{NVD(\mathbf{K}_1,\mathbf{K}_2)}=\frac{\parallel\mathbf{K}_1-\mathbf{K}_2\parallel_2^2}{\parallel \mathbf{K}_2\parallel_2^2}.
\end{split}
\end{equation}

\subsection{Results}
\label{7.3}
In this section, we evaluate the performance of KGNet and the initial key. Then, the security of the secret generation scheme proposed in this paper is analyzed.

\subsubsection{The Performance of KGNet}                         


The performance of the KGNet is critical to whether Alice and Bob can generate highly reciprocal channel characteristics. Besides KGNet, we also considered the following three benchmark models.
\begin{itemize}
	\item FNN. A FNN is originally designed in \cite{huang2019deep} for uplink/downlink channel calibration for massive MIMO systems, which can be used for the band feature mapping for SISO-OFDM system. It consists of three hidden layers and we choose the numbers of neurons in the hidden layer are (512, 1024, 512) by adjustments. FNN in \cite{huang2019deep} differs from KGNet proposed in this paper not only in the number of layers of the network, but also in all layers of the FNN, the tanh function is used as the activation function. 
	\item Advanced-FNN. It has the same architecture as FNN but its activation functions are the same as KGNet. 
	\item 1D-CNN. Its structure is to add a 1D convolutional layer with 8 convolution kernels of size 8 and a max pooling layer with step size 2 and pooling kernel size 2 between the input layer and the first hidden layer of Advanced-FNN. 
\end{itemize}
We will compare the KGNet with the three benchmark networks from three aspects, namely, the size of training data required, the fitting performance of the network, and the generalization performance under various SNR.

The size of training dataset is crucial in the KGNet training. Fig. \ref{Training_dataset_size} shows how the performance of the KGNet improves as the size of the training dataset grows. The network is trained from scratch for every training dataset size and is tested on the fixed 10,000 test set. We compare the size of training data required for each of the four networks to achieve optimal results. As expected, the performance of the four networks has improved as the size increased. However, all four networks achieve a stable NMSE after 80,000 training sets. Hence, we use 80,000 sets of data for training and 20,000 sets of data for testing in the rest of the paper. 
\begin{figure}[!t]
	\centering 
	\includegraphics[width=\linewidth]{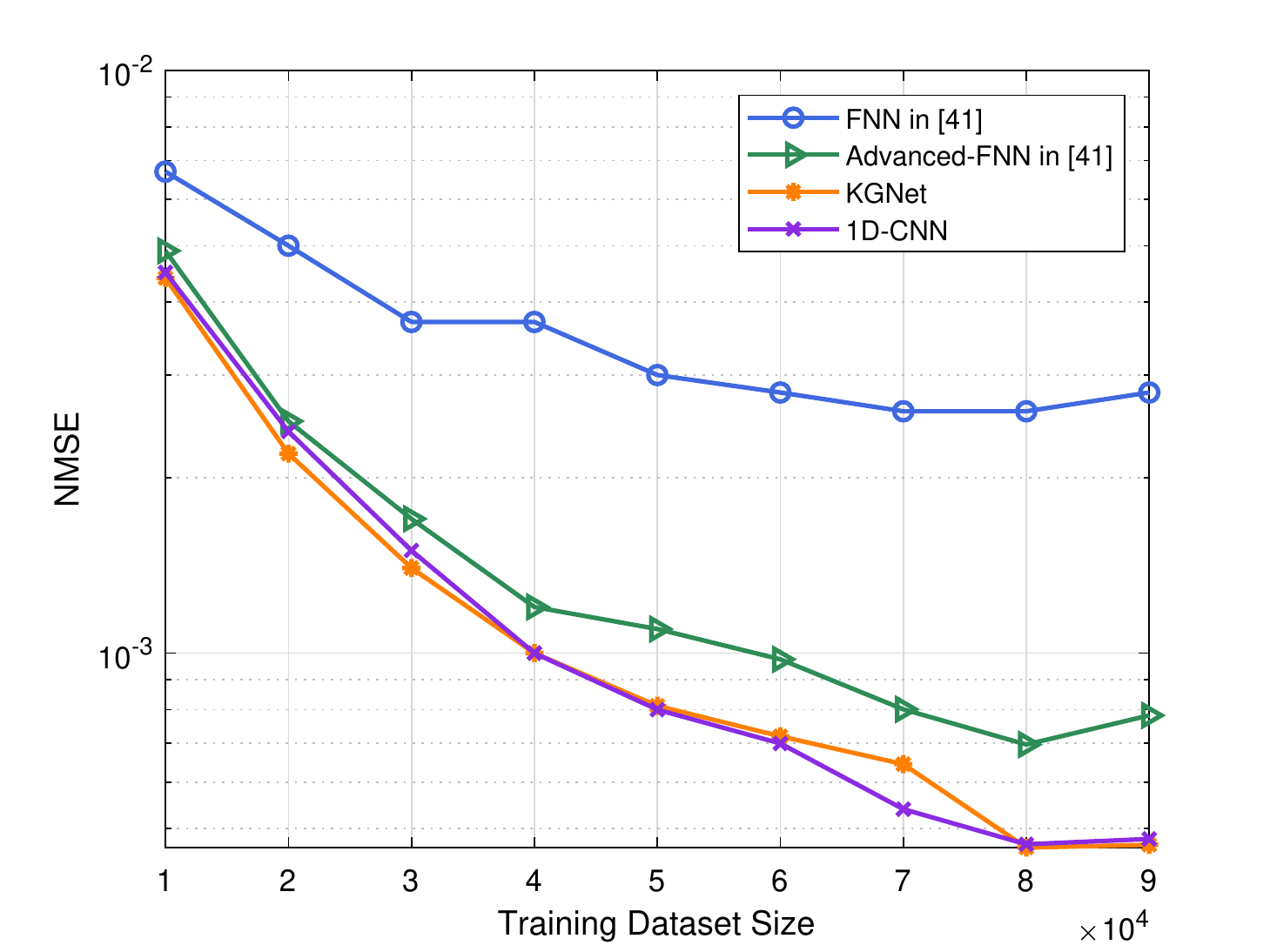} 
	\caption{The NMSE of the four networks versus the size of training dataset.}
	\label{Training_dataset_size} 
\end{figure}

In Fig. \ref{Performance_differentNN}, we compare their fitting performance. From the perspective of fitting performance, KGNet and 1D-CNN structures can be selected for band feature mapping. In addition, the NMSE of the four networks is at a basic level when the number of epochs is 500, which is used in the rest of the paper.
\begin{figure}[!t]
	\centering 
	\includegraphics[width=\linewidth]{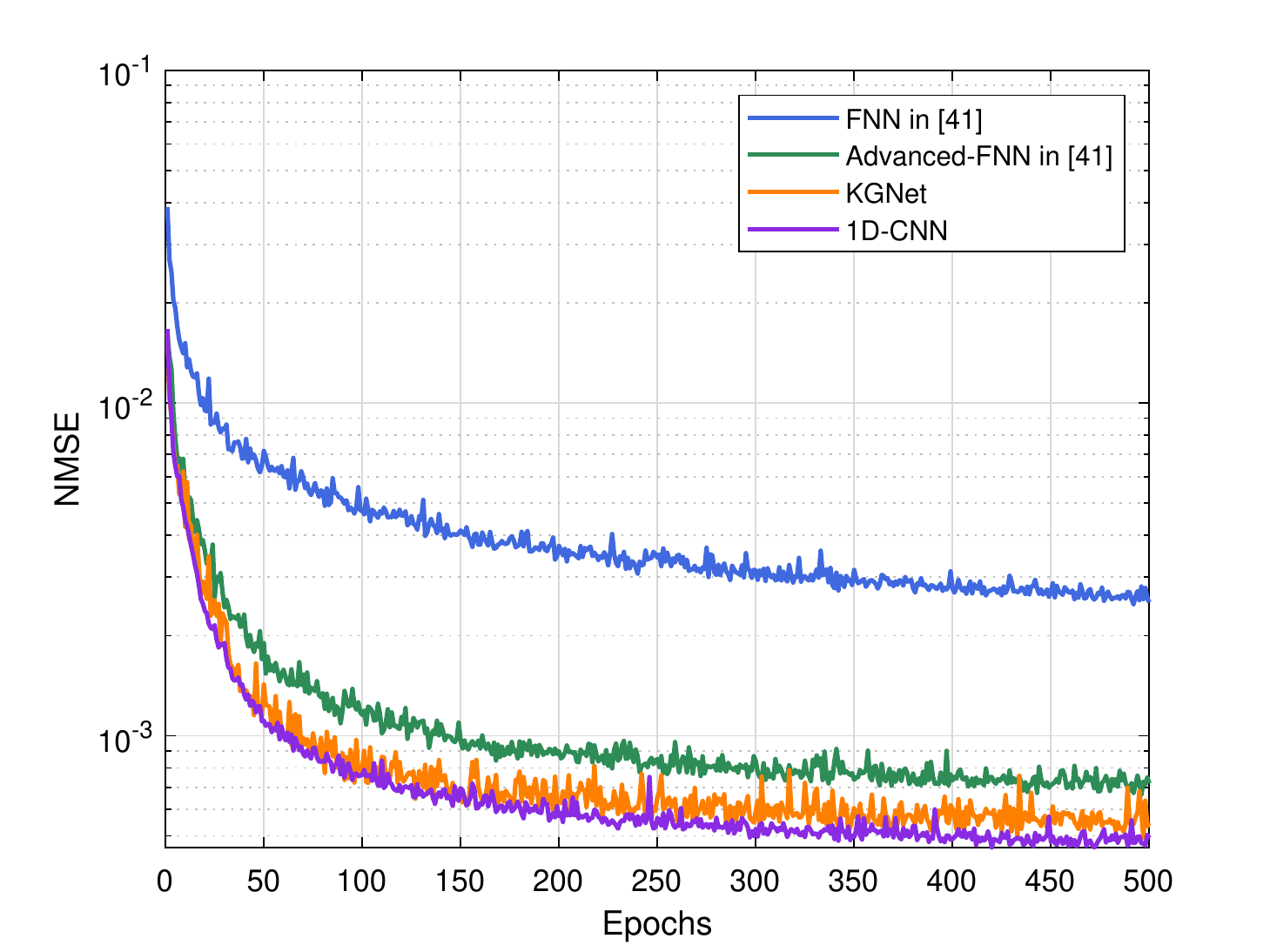} 
	\caption{The NMSE of the four networks versus the number of epochs.}
	\label{Performance_differentNN} 
\end{figure}

Moreover, the generalization of the network is critical to the performance of key generation. 
The four networks were trained with the original dataset (without noise).
We then added complex white Gaussian noise to the test dataset to generate new test datasets with different SNR levels in the range of 0-40 dB with a 5 dB step.
As shown in Fig. \ref{NMSE_SNR_differentNN}, as the SNR increases, the NMSE of the network decreased. Compared with the other three networks, KGNet has better performance in the range of 0-40 dB. 
\begin{figure}[!t]
	\centering 
	\includegraphics[width=\linewidth]{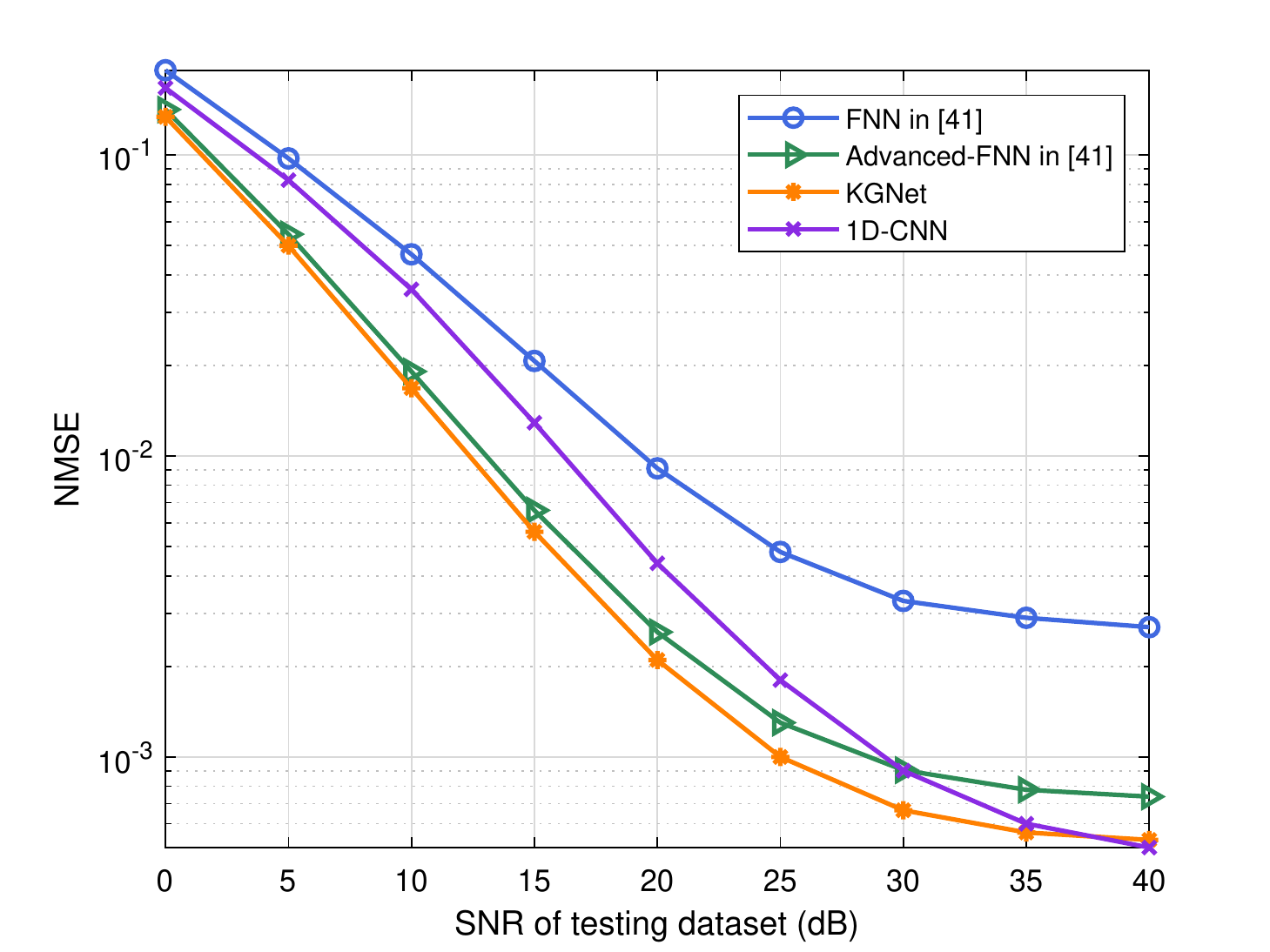} 
	\caption{The NMSE of the four networks versus SNR of test sets.}
	\label{NMSE_SNR_differentNN} 
\end{figure}

Based on the analysis in the above three aspects, KGNet is the best choice for band feature mapping. 
Besides the above model, we tried other architectures by adding more convolutional layers and full connection layers based on the 1D-CNN and Advanced-FNN. Although the performance of network is slightly improved under high SNR, it also declines under low SNR. Hence they are not compared in this paper.

\subsubsection{Adding Artificial Noise to Improve Robustness}
In order to improve the generalization performance of the network under low SNR, we added artificial noise to the original training dataset. Firstly, we used the dataset under a single SNR as the training dataset, and observed the NMSE of KGNet when tested under different SNRs. As shown in Fig. \ref{NMSE_SNR_differentSNR_training}, when the SNR of the training dataset is lower, the NMSE of KGNet is better when the SNR is lower than 15 dB, and the NMSE is worse when the SNR exceeds 15 dB. When the training dataset is the original noise-free data, the network has the best performance under high SNR while the network performs well under low SNR when the training dataset is SNR of 0 dB. Therefore, we choose to cross the original noise-free dataset with the dataset with SNR of 0 dB, so that the KGNet can achieve excellent performance under 0 - 40 dB.
\begin{figure}[!t]
	\centering 
	\includegraphics[width=\linewidth]{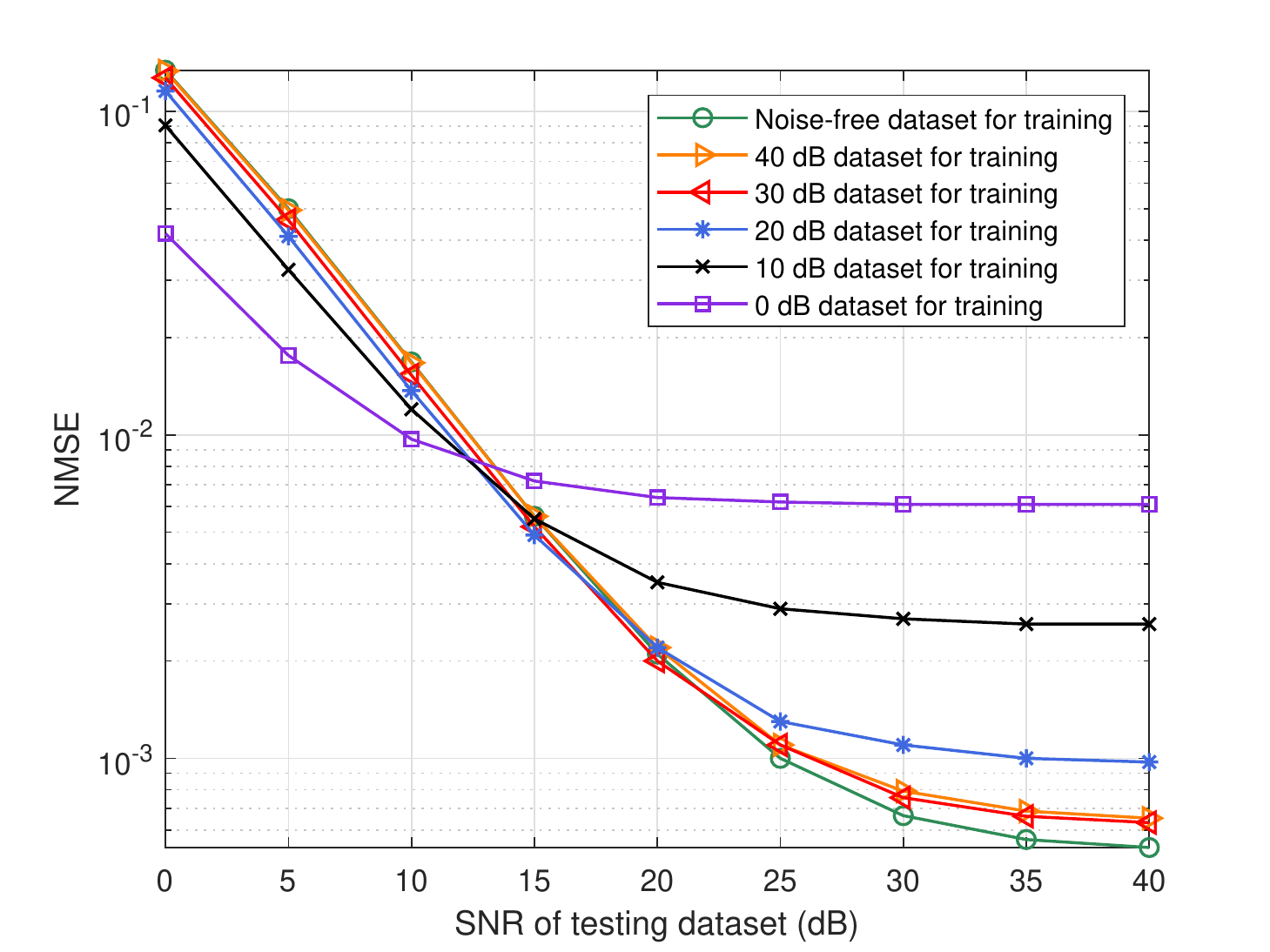} 
	\caption{The NMSE of the KGNet trained with different SNR dataset versus SNR of testing dataset.}
	\label{NMSE_SNR_differentSNR_training} 
\end{figure}

Then, we combined the noise-free dataset and the 0 dB dataset in different sizes to form the new training dataset. We constructed five cross datasets:
\begin{itemize}
	\item Cross dataset 1: A dataset combining 80,000 sets of 0 dB dataset and 0 sets of noise-free dataset; 
	\item Cross dataset 2: A dataset combining 60,000 sets of 0 dB dataset and 20,000 sets of noise-free dataset; 
	\item Cross dataset 3: A dataset combining 40,000 sets of 0 dB dataset and 40,0000 sets of noise-free dataset; 
	\item Cross dataset 4: A dataset combining 20,000 sets of 0 dB dataset and 60,000 sets of noise-free dataset; 
	\item Cross dataset 5: A dataset combining 0 sets of 0 dB dataset and 80,000 sets of noise-free dataset.
\end{itemize}
As shown in Fig. \ref{NMSE_SNR_crossSNR_training}, when the network is trained with the cross dataset 5, the performance of the network is the best when the SNR is above 25 dB, but its test performance under low SNR is too bad compared to the other four cross datasets. The test performance of the network trained with the cross datasets 1-4 is almost the same in the low SNR. Therefore, in order to ensure the excellent performance under high SNR and improve the performance under low SNR to a certain extent, we choose the cross dataset 4.
\begin{figure}[!t]
	\centering
	\includegraphics[width=\linewidth]{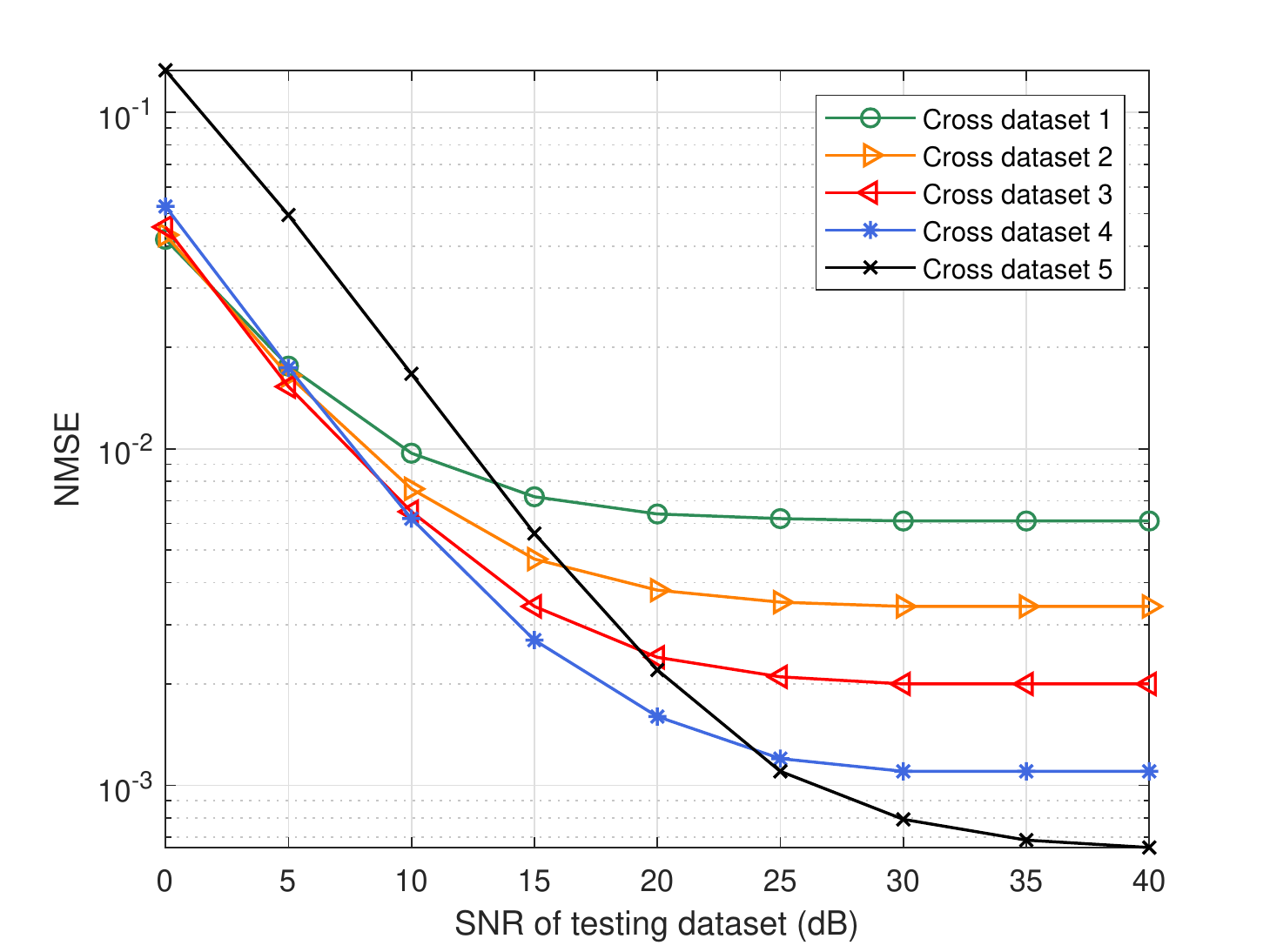} 
	\caption{The NMSE of the KGNet trained with cross datasets 1 - 5 versus SNR of testing dataset. }
	\label{NMSE_SNR_crossSNR_training} 
\end{figure}

Finally, we used the cross dataset 4 to train the four networks, and the NMSE of the four networks tested under different SNRs is shown in Fig. \ref{NMSE_SNR_differentNN_crossSNR_training}. Compared with the network performance shown in Fig. \ref{NMSE_SNR_differentNN}, the use of cross dataset to train the network reduces the performance of the four networks under high SNR, and the performance of the four networks under low SNR is improved. It is obvious that the generalization performance of the four networks has been improved, and the generalization performance of KGNet is still better than the other three networks.
\begin{figure}[!t]
	\centering 
	\includegraphics[width=\linewidth]{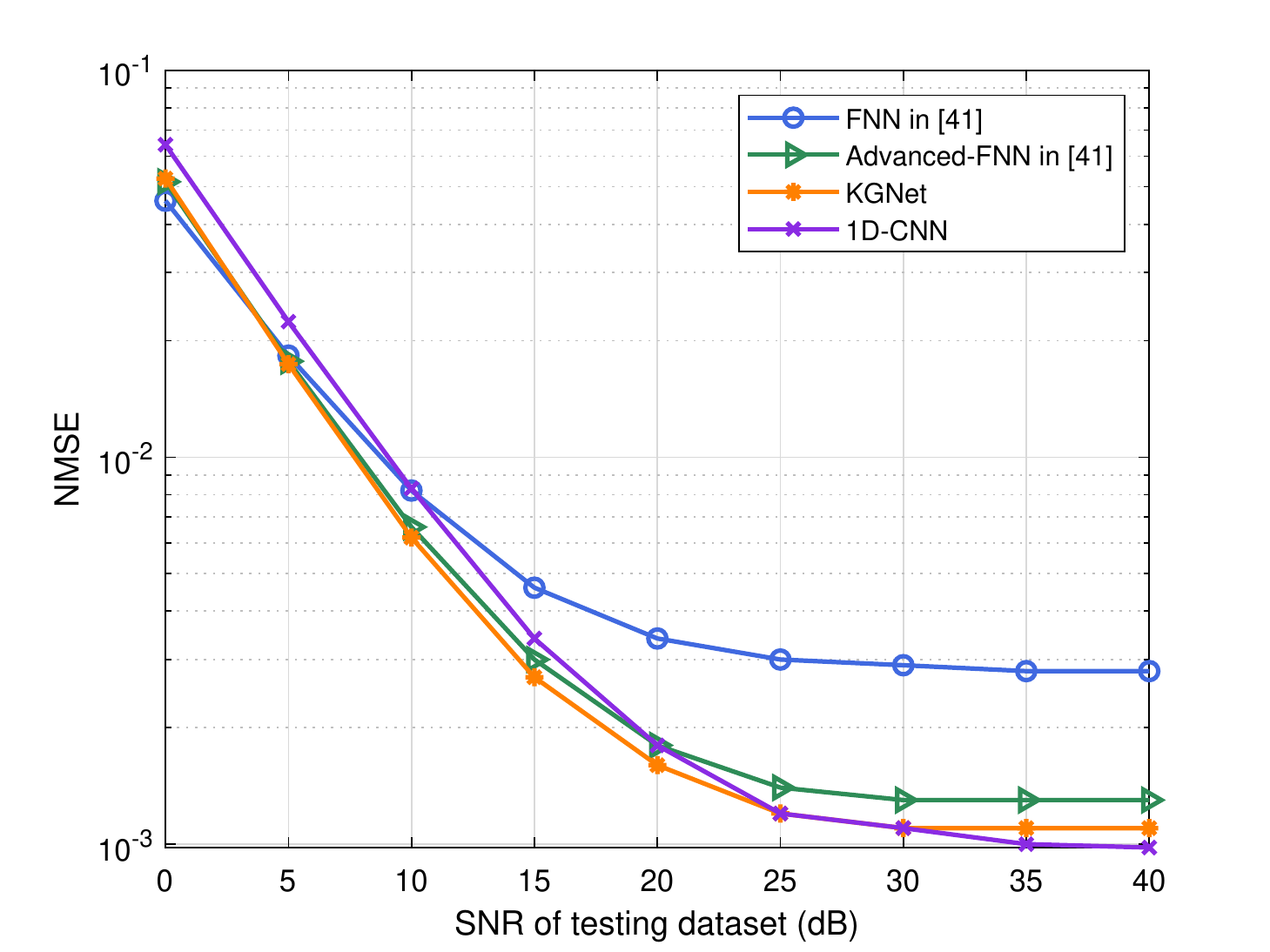} 
	\caption{The NMSE of the four networks versus SNR when all networks are trained with the cross dataset 4.}
	\label{NMSE_SNR_differentNN_crossSNR_training} 
\end{figure}

\subsubsection{The Performance of Initial Key}

The performance of the initial key plays the most important role in secret key generation problems as it provides the capability Alice and Bob can achieve the same secret keys. 

\begin{table}[!t] 
\caption{SNR thresholds to achieve KER = $10^{-1}$ when $\varepsilon=\frac{0.2}{2(K-1)}$.}
\label{tab:multiquantization}
	\centering
	\begin{tabular}{|p{3cm}<{\centering}| c|c|c|}
		\hline
		\bfseries  Quantization level& \bfseries $K=2^1$ & \bfseries $K=2^2$ & \bfseries $K=2^3$\\ \hline
		\textbf{SNR threshold (dB)}& 5 & 15 & --\\
		\hline
	\end{tabular}
\end{table}

In practice, the commonly used information reconciliation methods can correct the initial key with KER less than $10^{-1}$. Table \ref{tab:multiquantization} compares the lowest SNRs under different quantization levels to achieve this goal when $\varepsilon=\frac{0.2}{2(K-1)}$. $\varepsilon=\frac{0.2}{2(K-1)}$ means that regardless of the level of quantization, about 20\% of the channel characteristics are discarded. When $K=2^{3}$, the KER under 40 dB is 0.15, which still cannot meet the target. In order to better analyze the impact of other parameters on the performance of the initial key, the quantization level in the subsequent analysis is $2^{1}$.

Fig. \ref{KERKGR_SNR_differentNN} compares the average KER performance of the four networks under 20,000 testing dataset. We chose the quantization factor $\varepsilon$ of 0.1.
As the SNR increases, the KER of the four networks decreases, and the KER of KGNet is always lower than the other networks. When the SNR is higher than 30 dB, the KER performance of KGNet is lower than $10^{-3}$. Furthermore, Fig. \ref{KERKGR_SNR_differentNN} compares the average KGR performance of the three networks under 20,000 testing dataset. When the SNR is low and the network performance is not good enough, there will be more features in the isolation band, and its KGR performance is reduced. At the theoretical level, we choose the quantization factor to be  0.1, which means that about 20\% of the channel features in the total $2L$ channel features are deleted during the quantization process, so the number of quantized key bits generated is $1.6L$ and the KGR is 1.6. As shown in Fig. \ref{KERKGR_SNR_differentNN}, with the growth of SNR, the KGR performance of the four networks also continues to increase and is close to 1.6 when it is above 25 dB, which means the Gaussian distribution fits the distribution of channel features well.

\begin{figure}[!t]
	\centering 
	\includegraphics[width=\linewidth]{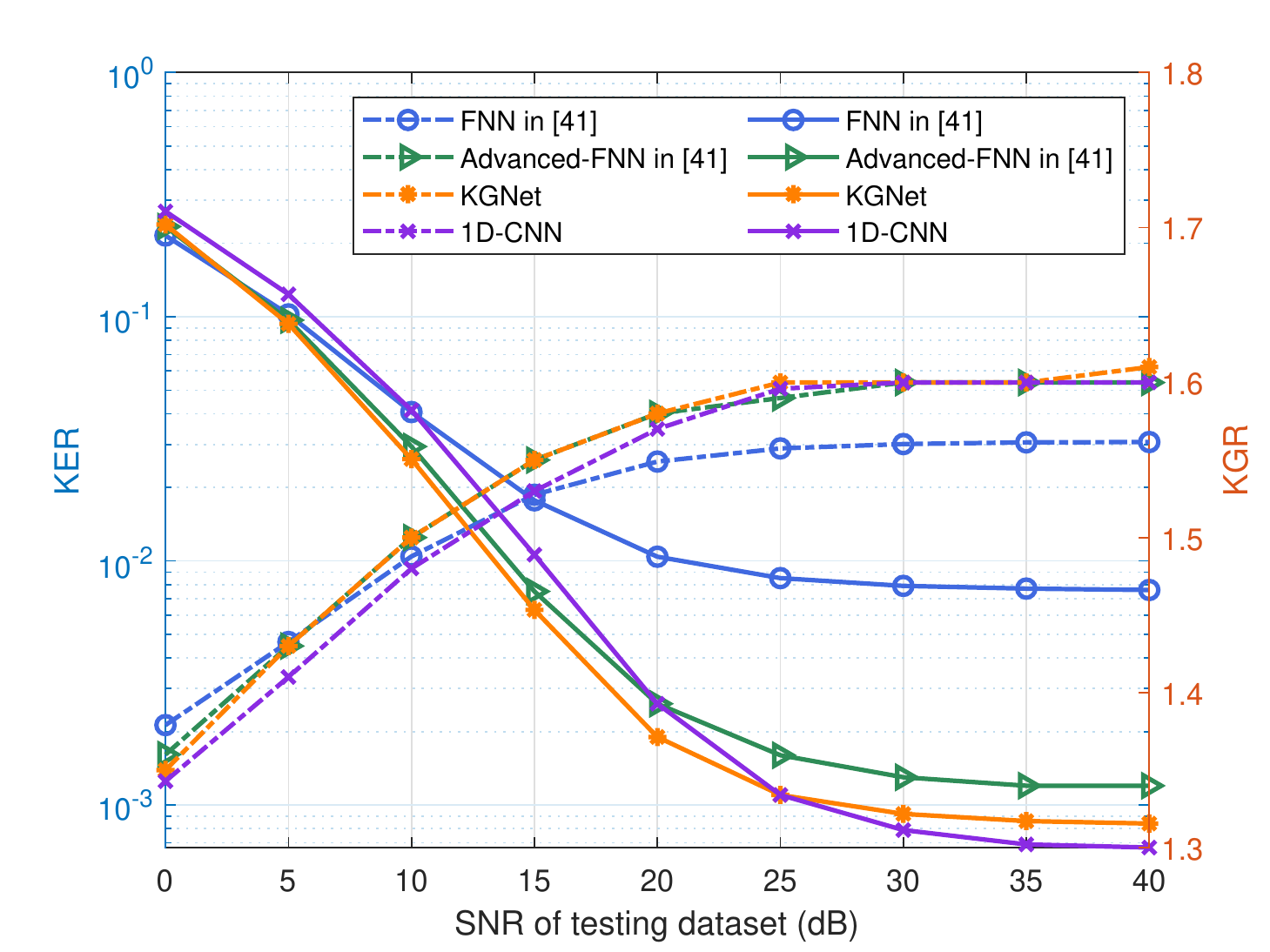} 
	\caption{The KER and KGR of the initial key based on the four networks versus SNR. All networks are trained with the cross dataset 4. The quantization factor is 0.1. The solid line represents KER and the dashed line represents KGR.}
	\label{KERKGR_SNR_differentNN} 
\end{figure}
\begin{figure}[!t]
	\centering 
	\includegraphics[width=\linewidth]{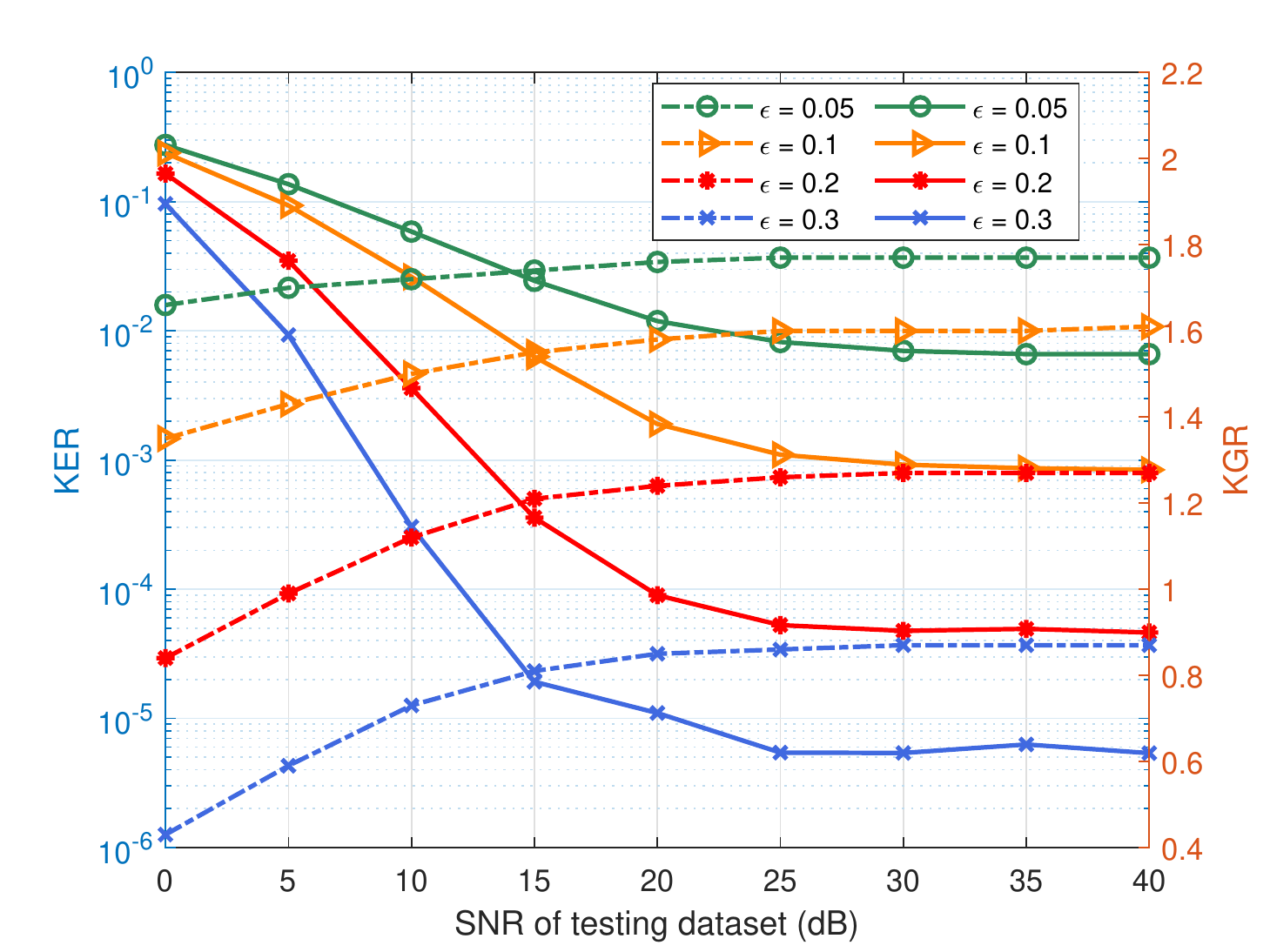} 
	\caption{The KER and KGR of the initial key based on the KGNet versus SNR. The KGNet is trained with the cross dataset 4.  The solid line represents KER and the dashed line represents KGR. }
	\label{KERKGR_SNR_quantization_factor} 
\end{figure}

%
Additionally, the choice of the quantization factor has a great influence on the performance of KER and KGR. Fig. \ref{KERKGR_SNR_quantization_factor} compares the performance of KER and KGR under the quantization factor is 0.005, 0.1, 0.2 and 0.3. The outcomes show that when $\varepsilon$ is higher, the KER becomes smaller and the KGR becomes larger. In practice, to satisfy the high agreement, the KER need to be less than $10^{-3}$. To reduce the KER as well as the information exposed during the information reconciliation, different quantization factors can be set under different SNRs. For example, we can set $\varepsilon$ is 0.1 when SNR exceeds 25 dB, $\varepsilon$ is 0.2 when SNR is between 15 dB to 25 dB, $\varepsilon$ is 0.3 when SNR is under 15 dB.

The NIST is a common tool to evaluate the randomness feature of binary sequences \cite{8314118}, which is also adopted in our work.
The output of each test is the p-value. When the p-value is greater than a threshold, which is usually 0.01, the detected sequence passes the test. 
Since the quantization method we adopted was discarded in the quantization process, the length of each group of keys cannot reach the minimum requirement of 128 bits for NIST detection. We splice 20000 sets of keys together to form $t$ sets of 128-bit keys, generally $t$ is less than 20000.
We perform 8 NIST statistical tests on $t=15,981$ sets of keys generated under 25 dB. The serial test includes two types of tests; the serial test is deemed passed when both tests pass. Table \ref{table_randomness} shows the pass rate of the test with different quantization intervals, which is the ratio of the number of sets passed to the number of all sets. 
\begin{table}[!t] 
\caption{NIST statistical test pass ratio.} 
	\centering
	\begin{tabular}{|p{4cm}<{\centering}| p{3cm}<{\centering} |}
		\hline
		 \bfseries Test&\bfseries Pass ratio \\ \hline
		 Approximate Entropy & 0.8363 \\\hline
		 Block Frequency & 0.9158\\\hline
		 Cumulative Sums &0.9972 \\\hline
		 Discrete Fourier Transform & 0.9995 \\\hline
		 Frequency & 0.7103\\\hline
		 Ranking & 0.8511\\\hline
		 Runs & 0.8733 \\\hline
		 Serial & 0.7461\\
		 \hline
	\end{tabular}
\label{table_randomness}
\end{table}

\subsubsection{The Performance at Eve}

The performance of Eve is related to whether the key can be eavesdropped. Since Eve is close to Bob and Bob receives the channel on $f_{AB}$, it is assumed that Eve eavesdrops on the channel over frequency $f_{AB}$ without loss of generality.
Alice, Bob and Eve can have channel feature vectors $\mathbf{x}_A$, $\mathbf{x}_B$ and $\mathbf{x}_E$ respectively. Fig.~\ref{fig:hor_2figs_1cap_2subcap_1} shows the channel features obtained by Alice, Bob and Eve with Bob at position 1 and Eve at position 2. The distance between position 1 and position 2 is 0.15 m, which just satisfies (\ref{d}) .The channel features obtained by Alice and Bob come from two channels with the same relative position and different frequencies, while those obtained by Bob and Eve come from two channels with different positions and the same frequencies. As Fig. \ref{fig:hor_2figs_1cap_2subcap_1} shown, changes in location and frequency both cause changes in channel characteristics. Fig.~\ref{fig:hor_2figs_1cap_2subcap_2} shows the channel features obtained by Alice, Bob and Eve after Alice used KGNet for band feature mapping. The channel features obtained by Alice and Bob have a high degree of reciprocity, which are greatly different from those obtained by Eve.
\begin{figure}
	\centering
	\subfigure[Before Alice uses KGNet for channel feature mapping. SNR is 25 dB.]{
		\begin{minipage}[b]{\linewidth}
			\includegraphics[width=\linewidth]{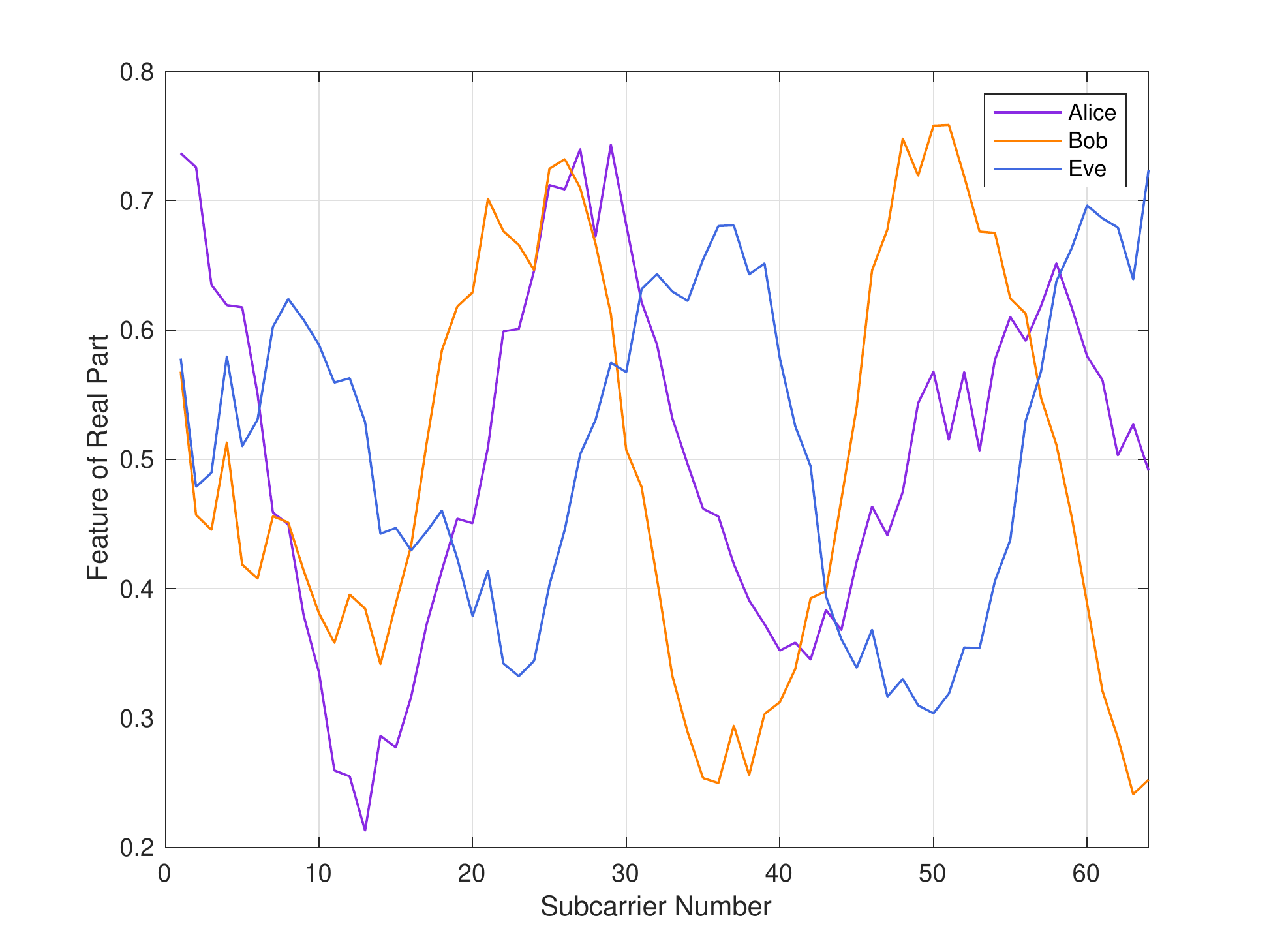}
		\end{minipage}
		\label{fig:hor_2figs_1cap_2subcap_1}
	}
	\subfigure[After Alice uses KGNet for channel feature mapping. SNR is 25 dB.]{
		\begin{minipage}[b]{\linewidth}
			\includegraphics[width=\linewidth]{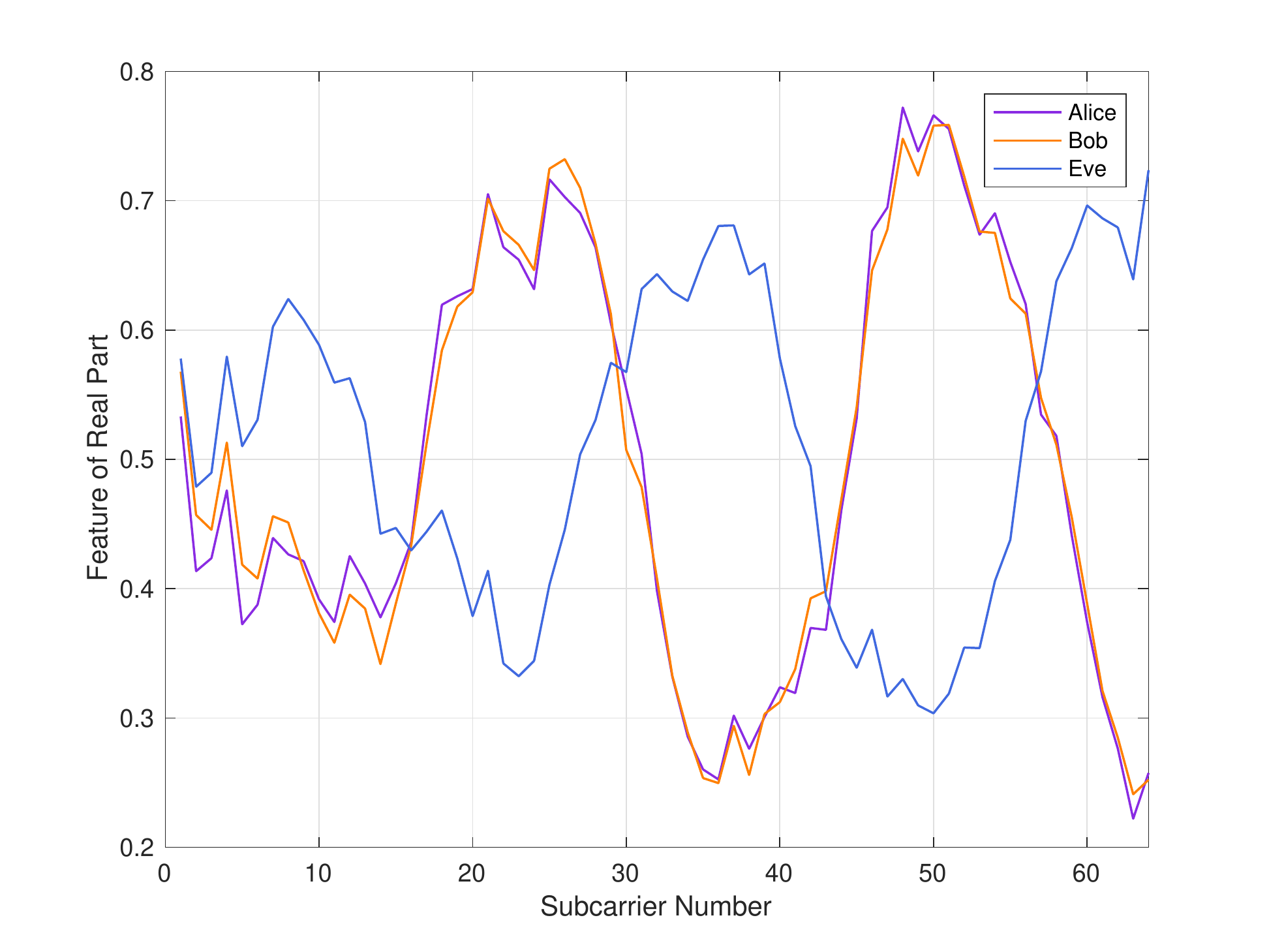}
		\end{minipage}
		\label{fig:hor_2figs_1cap_2subcap_2}
	}
	\caption{The performance of Alice, Bob and Eve.}
	\label{fig:hor_2figs_1cap_2subcap}
\end{figure}

 Fig. \ref{NMSE_SNR_Eve} compares the NVD performance of channel features obtained by Alice and Bob before and after using KGNet for band feature mapping and between Eve and Bob under different SNRs, when Bob and Eve are fixed at positions 1 and 2, respectively. 
We can find that when Alice does not use KGNet for feature mapping, the NVD of channel features obtained by Alice and Bob and the NVD of channel features obtained by Eve and Bob are similar. However, after Alice uses KGNet, the NVD between the channel feature obtained by Alice and Bob are much smaller than those obtained by Eve and Bob. Therefore, we believe that even when Eve and Bob are close in the same room, Eve has limited access to the channel features that Alice and Bob can use to generate keys.
\begin{figure}[!t]
	\centering 
	\includegraphics[width=\linewidth]{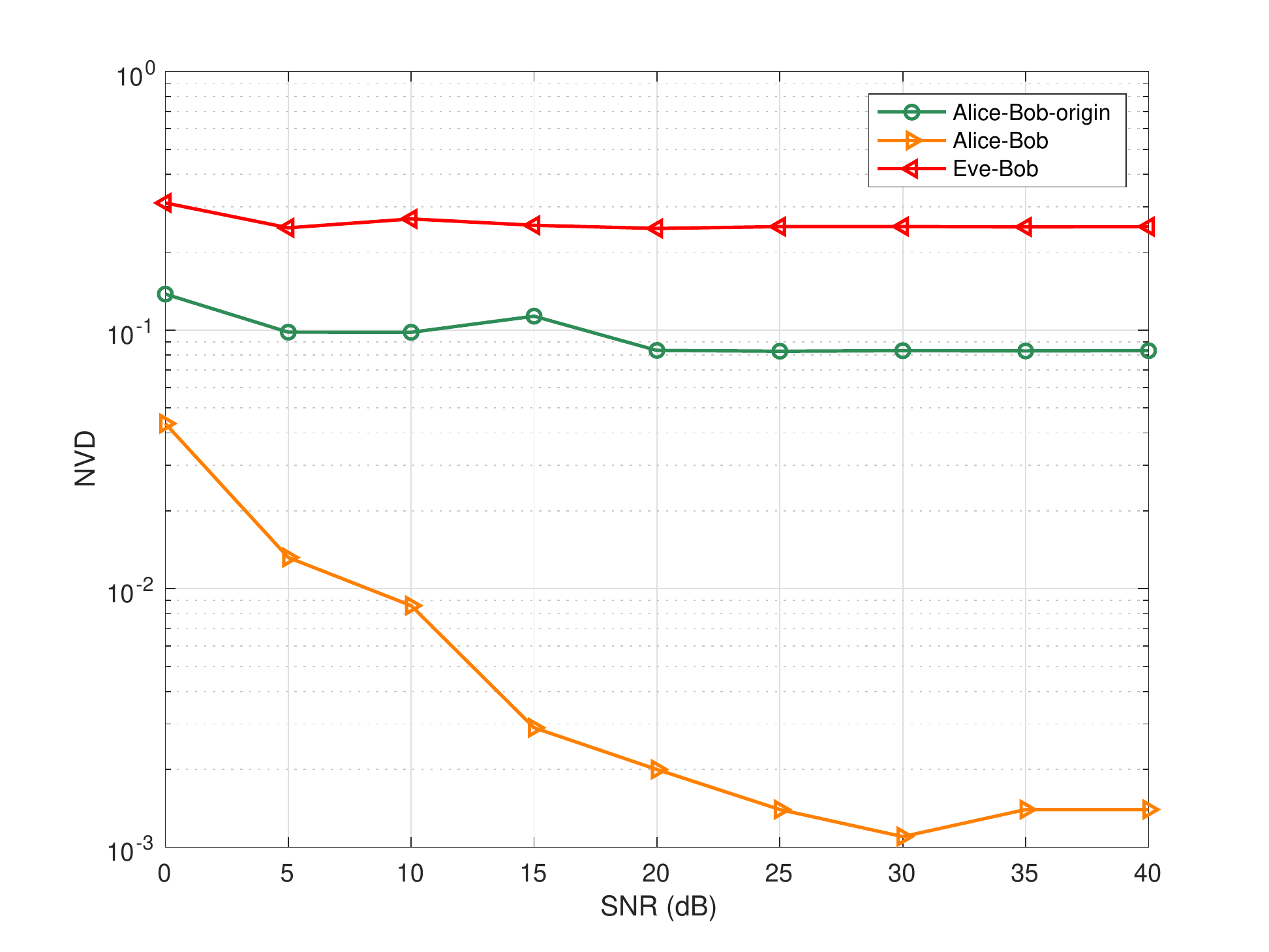} 
	\caption{The NVD of the channel features between Alice and Bob originally, between Alice and Bob after Alice uses KGNet, and between Eve and Bob versus SNR when Bob and Eve are fixed in position 1 and position 2, respectively. }
	\label{NMSE_SNR_Eve} 
\end{figure}

\subsection{Overhead Analysis}
\begin{table*}[!t] 
\caption{Complexity analysis and time cost for the four networks.}
	\centering
	\begin{tabular}{p{3cm}<{\centering}| p{2.5cm}<{\centering} | p{2.5cm}<{\centering} |p{2.5cm}<{\centering}|p{2.5cm}<{\centering}|p{2.5cm}<{\centering}}
		\hline
		 \bfseries Network&\bfseries Training time & \bfseries CPU average load & \bfseries GPU memory utilization &\bfseries FLOPs & \bfseries The number of trained parameters\\ \hline
		 FNN in \cite{huang2019deep}&12min 13s &15.2\%& 4.8 / 9.9 GB&2,357,120 &1,181,824\\
		 Advanced-FNN in \cite{huang2019deep}&12min 25s &15.6\%&4.8 / 9.9 GB&2,357,120 &1,181,824\\
		 KGNet & \textbf{14min 42s} & 15.2\% &4.8 / 9.9 GB &  \textbf{4,453,248}& \textbf{2,231,424} \\
		 1D-CNN &14min~~2s&\textbf{18.0\%}&\textbf{5.1 / 9.9 GB}&2,850,688&1,362,120\\
		 \hline
	\end{tabular}
\label{table_complexity}
\end{table*}
To measure the complexity of the KGNet, multiple indicators are used in this paper. Table \ref{table_complexity} compares the complexity of the four networks in terms of five indicators, i.e., training time, CPU average load, GPU memory utilization, the number of float-point operations (FLOPs), and the number of trained parameters. We calculate the number of FLOPs with reference \cite{molchanov2017pruning}. 
In terms of training time, it takes about 14min 42s to train the KGNet, which takes longer than other networks. 
In terms of the numbers of FLOPs and trained parameters, the numbers of FLOPs and training parameters required by the KGNet are nearly twice that of other networks. 
In terms of  CPU and GPU cost, the training of the KGNet requires approximately 15.2\% of the CPU load and 4.8 G of GPU memory (the total GPU memory is 9.9 G), which is similar to the cost of other types of networks.
In general, the resource consumption required for the KGNet training is higher than that of other networks.
However, this overhead is considered acceptable, and the performance improvement brought about by these excess consumption is more important. The reasons are summarized as follows.
\begin{itemize}
	\item We choose to train and deploy the network at the base station, and the base station usually has a large amount of computing resources, which is sufficient to meet the resources required for network training and storage. The training time required to train the network on the base station will be greatly reduced than in the simulation. Besides, the training of the neural network can be done on the cloud without spending any resources of the base station.
	\item We are modeling based on location, and only need to train a network for a large area. All terminal devices in this area can establish a secure connection with the base station by using this network. As long as the environment in the area does not undergo large-scale changes, the trained network can be used forever. Even if the environment changes, we can fine-tune the network through a small number of datasets.
	\item The networks with better performance can generate initial keys with lower BER and higher BGR, which will reduce the overhead of subsequent information reconciliation. Therefore, from a long-term perspective, it is very cost-effective to exchange the additional cost required for training for a network with better performance.
\end{itemize}

Besides, it needs to be emphasized that compared with the key generation for TDD systems,  the base station does not have any additional overhead except for the additional use of the KGNet for frequency band feature mapping in FDD systems. In particular, the terminal does not need to train and save the neural network, and there is no additional overhead other than regular key generation, which means it is suitable for resource-constrained IoT devices.

\section{Conclusion}
\label{conclusion}This paper is the first work to apply deep learning to the design of FDD key generation scheme. We first demonstrated the mapping of channel features in different frequency bands under a condition that the mapping function from the candidate user positions to the channels is bijective. Then, we proposed the KGNet for frequency band feature mapping to construct reciprocal channel feature between communication parties. Based on the KGNet, a novel secret key generation scheme for FDD systems was established. Simulation results have demonstrated that the KGNet performs better than other benchmark neural networks in terms of both fitting and generalization under low SNR. In addition, training the neural network with a cross dataset that combines noisy dataset and noise-free dataset can improve the robustness of the neural network. Moreover, the proposed KGNet-based FDD system key generation scheme is evaluated from the KER, KGR and randomness, and it is verified that it can be used for key generation for FDD systems. Our future work will extend this approach to design key generation for FDD-MIMO systems.

\appendix
\section{Appendix}
\subsection{Proof of Proposition 1}
\label{proof1}
\begin{proof}
	Under definition 1, we have the mapping $\boldsymbol{\Phi}_{f_{AB}}:\{P_B\}\rightarrow\{\textbf{H}(f_{AB})\}$. Under assumption 1, we have the mapping $\boldsymbol{\Phi}_{f_{BA}}^{-1}:\{\textbf{H}(f_{BA})\} \rightarrow \{P_B\}$. Since the domain of $\boldsymbol{\Phi}_{f_{AB}}$ is the same as co-domain of $\boldsymbol{\Phi}_{f_{BA}}^{-1}$, the composite mapping $\boldsymbol{\Psi}_{f_{BA}\rightarrow f_{AB}}$ exists.
\end{proof}

\subsection{Proof of Proposition 2}
\label{proof2}
\begin{proof}
	Under proposition 1, we have the mapping $\boldsymbol{\Psi}_{f_{BA}\rightarrow f_{AB}}$. Under mapping function $\boldsymbol{\xi}_{f}$ and $\boldsymbol{\xi}_{f_{AB}}^{-1}$, we have the mapping $\boldsymbol{\xi}_{f_{AB}}:\textbf{H}(f_{AB})\rightarrow\textbf{x}(f_{AB})$ and $\boldsymbol{\xi}_{f_{BA}}^{-1}:\textbf{x}(f_{BA}) \rightarrow \textbf{H}(f_{BA})$. Since the domain of $\boldsymbol{\xi}_{f_{AB}}$ is the same as co-domain of $\boldsymbol{\Psi}_{f_{BA}\rightarrow f_{AB}}$, the domain of $\boldsymbol{\Psi}_{f_{BA}\rightarrow f_{AB}}$ is the same as co-domain of $\boldsymbol{\xi}_{f_{BA}}^{-1}$, the mapping $\boldsymbol{\Psi}^{'}_{f_{BA}\rightarrow f_{AB}}$ exists.
\end{proof}
\subsection{Proof of Theorem 1}
\label{proof3}
\begin{proof}
	(i) Since $\textbf{h}_{f_{BA}}$ is bounded and closed, $\textbf{x}_{f_{BA}}$ obtained after linear change is still bounded and closed, $\mathbb{H}$ is a compact set; (ii) Since $\boldsymbol{\Phi}_{f_{AB}}$ and $\boldsymbol{\Phi}_{f_{BA}}^{-1}$ are continuous mappig and the composition of continuous mappings is still a continuous mappinng, we know $\boldsymbol{\Psi}_{f_{BA}\rightarrow f_{AB}}$ is a continuous function. (iii) Since  $\boldsymbol{\Psi}_{f_{BA}\rightarrow f_{AB}}$and $\boldsymbol{\xi}$ are a continuous function, we know $\boldsymbol{\Psi}^{'}_{f_{BA}\rightarrow f_{AB}}(\textbf{x}(f_{BA}))$ is also a continuous function for $ \forall \textbf{x}(f_{BA}) \in \mathbb{H}$. Based on (i), (ii), (iii) and universal approximation theorem \cite[Theorem 1]{hornik1989multilayer}, Theorem 1 is proved.
\end{proof}

\bibliographystyle{IEEEtran} 
\bibliography{IEEEabrv,myref}


\begin{IEEEbiography}[{\includegraphics[width=1in,height=1.25in,clip,keepaspectratio]{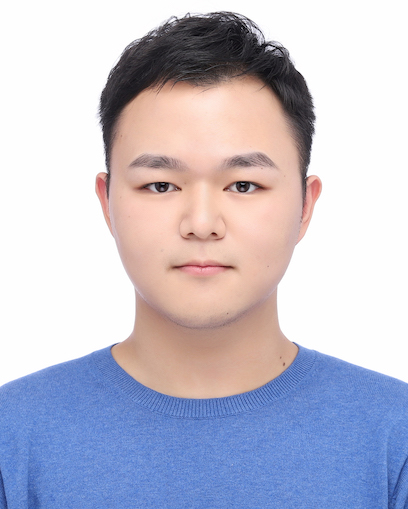}}]{Xinwei Zhang} (Student Member, IEEE) received the B.S. degree in computer science and technology from Henan University of Technology, Zhengzhou, China, in 2018. He is currently pursuing the M.S. degree in computer technology with Southeast University. His research interests include physical-layer security, and secret key generation.
\end{IEEEbiography}

\begin{IEEEbiography}[{\includegraphics[width=1in,height=1.25in,clip,keepaspectratio]{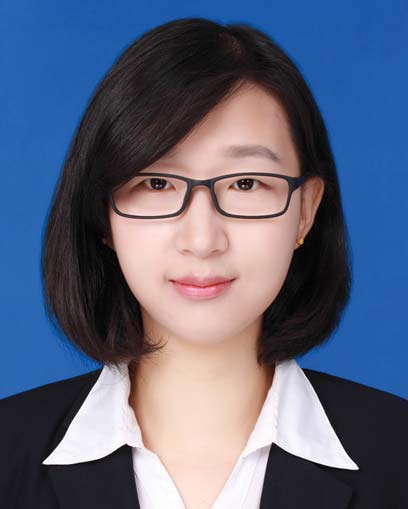}}]{Guyue Li}
(Member, IEEE) received the B.S. degree in information science and technology and the Ph.D. degree in information security from Southeast University, Nanjing, China, in 2011 and 2017, respectively. 

From June 2014 to August 2014, she was a Visiting Student with the Department of Electrical Engineering, Tampere University of Technology, Finland. She is currently an Associate Professor with the School of Cyber Science and Engineering, Southeast University. Her research interests include physical-layer security, secret key generation, radio frequency fingerprint, and link signature.
\end{IEEEbiography}

\begin{IEEEbiography}[{\includegraphics[width=1in,height=1.25in,clip,keepaspectratio]{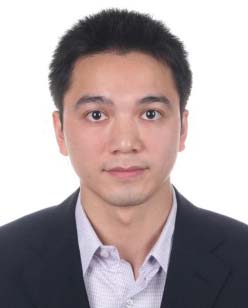}}]{Junqing Zhang}
received the B.Eng and M.Eng degrees in Electrical Engineering from Tianjin University, China in 2009 and 2012, respectively, and the Ph.D degree in Electronics and Electrical Engineering from Queen's University Belfast, UK in 2016. 
From Feb. 2016 to Jan. 2018, he was a Postdoctoral Research Fellow with Queen's University Belfast. From Feb. 2018 to May 2020, he was a Tenure Track Fellow (Assistant Professor) with University of Liverpool, UK. Since June 2020, he is a Lecturer (Assistant Professor) with University of Liverpool. His research interests include Internet of Things, wireless security, physical layer security, key generation, and radio frequency fingerprint identification.

\end{IEEEbiography}

\begin{IEEEbiography}[{\includegraphics[width=1in,height=1.25in,clip,keepaspectratio]{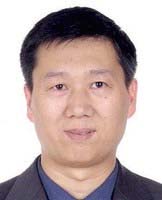}}]{Aiqun Hu}
(Senior Member, IEEE) received the B.Sc.(Eng.), M.Eng.Sc., and Ph.D. degrees from Southeast University in 1987, 1990, and 1993, respectively. 

He was invited as a Post-Doctoral Research Fellow with The University of Hong Kong from 1997 to 1998, and a TCT Fellow with Nanyang Technological University in 2006. He has published two books and more than 100 technical articles in wireless communications field. His research interests include data transmission and secure communication technology.
\end{IEEEbiography}

\begin{IEEEbiography}[{\includegraphics[width=1in,height=1.25in,clip,keepaspectratio]{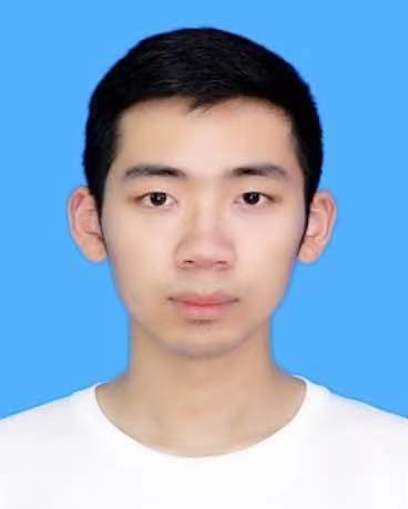}}]{Zongyue Hou}
received the B.S. degree in communication engineering from the Jiangnan University, Wuxi, China, in 2020. He is currently pursuing the M.S. degree in electronic information with Southeast University. His research interests include physical-layer security, and secret key generation.
\end{IEEEbiography}

\begin{IEEEbiography}[{\includegraphics[width=1in,height=1.25in,clip,keepaspectratio]{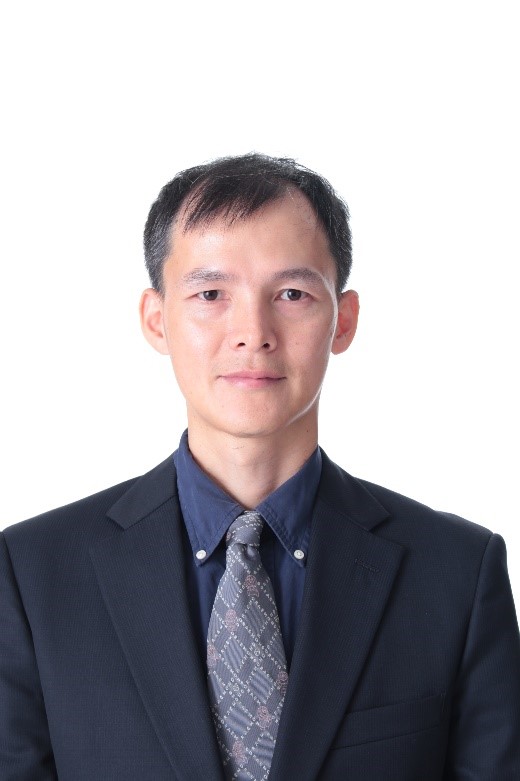}}]{Bin Xiao}
(S’01-M’04-SM’11) received the B.Sc. and M.Sc. degrees in electronics engineering from Fudan University, China, and the Ph.D. degree in computer science from The University of Texas at Dallas, USA. He is currently a Professor with the Department of Computing, The Hong Kong Polytechnic University. Dr. Xiao has over ten years research experience in the cyber security, and currently focuses on the blockchain technology and AI security. He published more than 100 technical papers in international top journals and conferences. Currently, he is the associate editor of the Journal of Parallel and Distributed
Computing (JPDC) and the vice chair of IEEE ComSoc CISTC committee. He has been the symposium co-chair of IEEE ICC2020, ICC 2018 and Globecom 2017, and the general chair of IEEE SECON 2018.
\end{IEEEbiography}

\end{document}